\newtheorem{thm}{Theorem}[section]
\newtheorem{prop}[thm]{Proposition}
\newtheorem{lemma}[thm]{Lemma}
\newtheorem{remark}[thm]{Remark}
\newtheorem{claim}[thm]{Claim}
\newtheorem{defi}[thm]{Definition}
\newtheorem{cnst}[thm]{Construction}
\newcommand{\Fq}{\mathbb{F}_q}
\newcommand{\Fp}{\mathbb{F}_p}
\newcommand{\cC}{\mathcal{C}}
\newcommand{\bX}{{\bf X}}
\newcommand\blfootnote[1]{%
  \begingroup
  \renewcommand\thefootnote{}\footnote{#1}%
  \addtocounter{footnote}{-1}%
  \endgroup
}
\DeclareMathOperator{\poly}{poly}
\newcommand{\F}{\mathbb{F}}
\newcommand{\cR}{\mathcal{R}}
\newcommand{\ed}{\text{ED}}
\DeclareMathOperator{\charac}{char}
\begin{document}
	
	\title{Reed Solomon Codes Against Adversarial Insertions and Deletions}
\author{
	Roni Con\thanks{Blavatnik School of Computer Science, Tel Aviv University, Tel Aviv, Israel. Email: roni.con93@gmail.com} \and 
	Amir Shpilka \thanks{Blavatnik School of Computer Science, Tel Aviv University, Tel Aviv, Israel. Email: shpilka@tauex.tau.ac.il. The research leading to these results has received funding from the Israel Science Foundation (grant number 514/20) and from the Len Blavatnik and the Blavatnik Family Foundation.} \and
	Itzhak Tamo \thanks{Department of EE-Systems, Tel Aviv University, Tel Aviv, Israel. Email: zactamo@gmail.com.}
}
\date{}
\maketitle
\begin{abstract}
\blfootnote{The work of Itzhak Tamo and Roni Con was partially supported by the European Research Council (ERC grant number 852953) and by the Israel Science Foundation (ISF grant number 1030/15).}
	\sloppy 
	In this work, we  study the performance of Reed--Solomon codes against adversarial  insertion-deletion (insdel) errors.

	We prove that over fields of size $n^{O(k)}$ there are $[n,k]$ Reed-Solomon codes that can decode from $n-2k+1$ insdel errors and hence attain  the half-Singleton bound. We also give a deterministic construction of such codes over much larger fields (of size  $n^{k^{O(k)}}$).  Nevertheless, for $k=O(\log n /\log\log n)$ our construction runs in polynomial time. For the special case $k=2$, which received a lot of attention in the literature, we construct an $[n,2]$ Reed-Solomon code over a field of size $O(n^4)$ that can decode from $n-3$ insdel errors. Earlier  constructions required an exponential field size.
	Lastly, we  prove that any such construction requires a field of size $\Omega(n^3)$.

\end{abstract}
\thispagestyle{empty}
\newpage

\section{Introduction}
    Error-correcting codes are among the most widely used tools and objects of study in information theory and theoretical computer science. The most common model of corruption that is studied in the TCS literature is  that of errors or erasures. The model in which each symbol of the transmitted word is either replaced with a different symbol from the alphabet (an error) or with a `?' (an erasure). The theory of such codes began with the seminal work of Shannon, \cite{shannon1948mathematical}, who studied random errors and erasures and the work of Hamming \cite{hamming1950error} who studied the adversarial model for errors and erasures. These models are mostly well understood, and today we know efficiently encodable and decodable codes that are optimal for Shannon's model of random errors. For adversarial errors, we have optimal codes over large alphabets and  good codes (codes of constant relative rate and relative distance) for every constant sized alphabet.  
    
    Another important model that has been considered ever since Shannon's work is that of \emph{synchronization} errors. These are errors that affect the length of the received word. The most common model for studying synchronization errors is the insertion-deletion model  (insdel for short): an insertion error is when a new symbol is inserted between two symbols of the transmitted word. A deletion is when a symbol is removed from the transmitted word. For example, over the binary alphabet, when $100110$ is transmitted, we may receive the word $1101100$, which is obtained from two insertions ($1$ at the beginning and $0$ at the end) and one deletion (one of the $0$'s at the beginning of the transmitted word).  
    Observe that compared to the more common error model, if an adversary wishes to \emph{change} a symbol, then the cost is that of two operations - first deleting the symbol and then inserting a new one instead.
	
	Insdel errors appear in diverse settings such as optical recording, semiconductor devices, integrated circuits, and synchronous digital communication networks. Another important example is the trace reconstruction problem, which has applications in computational biology and DNA-based storage systems \cite{bornholt2016dna,yazdi2017portable,heckel2019characterization}. See the surveys \cite{mitzenmacher2009survey,mercier2010survey} for a good picture of the problems and applications of error-correcting codes for the insdel model (insdel codes for short). 

Reed-Solomon codes are the most widely used family of codes in theory and practice. Indeed, they have found many applications both in theory and in practice (their applications include QR codes \cite{soon2008qr}, secret sharing schemes \cite{mceliece1981sharing}, space transmission \cite{wicker1999reed}, encoding data on CDs  \cite{wicker1999reed} and more. The ubiquity of these codes can be attributed to their  simplicity as well as to their efficient encoding and decoding algorithms. 
	As such, it is an important problem to understand whether they can also decode from insdel errors. This problem  received a lot of attention recently \cite{safavi2002traitor,wang2004deletion,tonien2007construction,duc2019explicit,liu20212,chen2021improved,liu2021bounds}, but besides very few constructions (i.e., evaluation points for Reed-Solomon codes), not much was known before our work. We discuss this line of work in more detail in Section~\ref{sec:prev-results}.

In this paper, we first prove that there are Reed-Solomon codes that achieve the half-Singleton bound. In other words, there are optimal Reed-Solomon codes also against insdel errors. 
	We also give a set of evaluation points that define a Reed-Solomon code that achieves this bound. As the field size that we get grows very fast, our construction runs in polynomial time only for very small values of $\delta$.
	We also explicitly construct  $2$-dimensional RS codes over a  field size smaller than the previous known constructions.
	Unfortunately, we could not develop efficient decoding algorithms for our Reed-Solomon constructions, and we leave this as an open problem for future research.
	
\subsection{Basic definitions and notation}
	
	For an integer $k$, we denote $[k]=\{1,2,\ldots,k\}$. 
	Throughout this paper, $\log(x)$ refers to the base-$2$ logarithm. For a prime power $q$, we denote with $\F_q$ the field of size $q$.

	We denote the $i$th symbol of a string $s$ (or of a vector $v$) as  $s_i$ (equivalently $v_i$). Throughout this paper, we shall move freely between representations of vectors as strings and vice versa. Namely, we shall view each vector $v=(v_1, \ldots, v_n)\in \Fq^n$ also as a string by concatenating all the symbols of the vector into one string, i.e., $(v_1, \ldots, v_n) \leftrightarrow v_1 \circ v_2 \circ \ldots \circ v_n$. Thus, if we say that $s$ is a subsequence of some vector $v$, we mean that we view $v$ as a string and $s$ is a subsequence of that string. 
	
	An error correcting code of block length $n$  over an alphabet $\Sigma$ is a subset $\cC\subseteq \Sigma^n$. The rate of $\cC$ is $\frac{\log|\cC|}{n\log|\Sigma|}$, which  captures the amount of information encoded in every symbol of a codeword.	
	A linear code over a field $\F$ is a linear subspace $\cC\subseteq \F^n$. The rate of a linear code  $\cC$ of block length $n$ is $\cR=\dim(\cC)/n$.  Every linear code of dimension $k$ can be  described as the image of a linear  map, which, abusing notation, we also denote with $\cC$, i.e., $\cC : \F^k \rightarrow \F^n$. Equivalently, a linear code $\cC$ can be defined by a \emph{parity check matrix} $H$ such that $x\in\cC$ if and only if $Hx=0$. When $\cC\subseteq \F_q^n$ has dimension $k$  we say that it is an $[n,k]_q$ code. The minimal distance of $\cC$ with respect to a metric $d(\cdot,\cdot)$ is defined as $\text{dist}_{\cC}:= \min_{v\neq u \in \cC}{d(v,u)}$. 
	Naturally, we would like the rate to be as large as possible, but there is an inherent tension between the rate of the code and the minimal distance (or the number of errors that a code can decode from).
	In this work, we focus on codes against insertions and deletions. 
	\begin{defi}
		Let $s$ be a string over the alphabet $\Sigma$. The operation in which we remove a symbol from $s$ is called a \emph{deletion} and the operation in which we place a new symbol from $\Sigma$ between two consecutive symbols in $s$, in the beginning, or at the end of $s$, is called an \emph{insertion}. 
		
		A \emph{substring} of $s$ is a string obtained by taking consecutive symbols from $s$.
		A \emph{subsequence} of $s$ is a string obtained by removing some (possibly none) of the symbols in $s$. 
	\end{defi}

	The relevant metric for such codes is the edit-distance that we define next.
	
	\begin{defi}
	
		Let $s,s'$ be strings over the alphabet $\Sigma$. 
		A \emph{longest common subsequence} between $s$ and $s'$, is a subsequence $s_\textup{sub}$ of both $s$ and $s'$, of maximal length. We denote by $ \textup{LCS}(s,s')$ the length of a longest common subsequence.\footnote{Note that a longest common subsequence may not be unique as there can be a number of subsequences of maximal length. For example in the strings $s=(1,0)$ and $s'=(0,1)$.} 
		
		The \emph{edit distance} between $s$ and $s'$, denoted by $\ed(s,s')$, is the minimal number of insertions and deletions needed in order to turn $s$ into $s'$. One can verify that this measure  indeed defines a metric (distance function). 
	\end{defi}

	\begin{lemma}[See e.g.  Lemma 12.1 in \cite{crochemore2003jewels}]\label{lem:lcs}
		It holds that $\textup{ED}(s,s') = \left|s\right| + \left|s' \right| - 2 \textup{LCS}(s,s') $.
	\end{lemma}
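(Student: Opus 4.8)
The plan is to prove the two inequalities $\ed(s,s') \le |s|+|s'|-2\,\textup{LCS}(s,s')$ and $\ed(s,s') \ge |s|+|s'|-2\,\textup{LCS}(s,s')$ separately, the first by exhibiting an explicit edit sequence and the second by analyzing an optimal one.

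For the upper bound, fix a longest common subsequence $t$ of $s$ and $s'$, so $|t| = \textup{LCS}(s,s')=:\ell$. One transforms $s$ into $s'$ in two phases: first delete from $s$ the $|s|-\ell$ symbols that are not matched to $t$ (in a fixed embedding of $t$ into $s$), leaving exactly the string $t$; then insert the $|s'|-\ell$ symbols of $s'$ that are not matched to $t$ (in a fixed embedding of $t$ into $s'$), in their correct positions, producing $s'$. This uses $(|s|-\ell)+(|s'|-\ell) = |s|+|s'|-2\ell$ operations, giving the desired bound.

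For the lower bound, take any sequence of $d=\ed(s,s')$ insertions and deletions transforming $s$ into $s'$, and track the provenance of each symbol. The key observation is that in an \emph{optimal} sequence no symbol that was inserted is later deleted (otherwise removing both operations yields a shorter sequence, a contradiction); one should make this precise, e.g.\ by first arguing that any edit sequence can be modified, without increasing its length, so that this property holds. Under this property, every symbol of the final string $s'$ is either an originally-present symbol of $s$ that was never deleted (call these the \emph{survivors}) or a newly inserted symbol. Writing $c$ for the number of survivors, the survivors occur in the same relative order in $s$ and in $s'$, hence form a common subsequence, so $c \le \textup{LCS}(s,s')$. The number of deletions is exactly $|s|-c$ and the number of insertions is exactly $|s'|-c$, so $d = (|s|-c)+(|s'|-c) = |s|+|s'|-2c \ge |s|+|s'|-2\,\textup{LCS}(s,s')$. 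Combining the two inequalities proves the lemma.

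The main obstacle is making the survivor/provenance argument fully rigorous: one must carefully define what it means to ``trace'' a symbol through a sequence of insertions and deletions (symbols are not labeled, so this requires a bookkeeping convention, e.g.\ tagging the positions), and then justify the exchange argument that eliminates insert-then-delete pairs. A clean alternative that sidesteps some of this is an induction on $d=\ed(s,s')$: in the base case $s=s'$ and $\textup{LCS}=|s|=|s'|$; for the inductive step, consider the first operation, which produces a string $s''$ with $\ed(s'',s')=d-1$, and relate $\textup{LCS}(s,s')$ to $\textup{LCS}(s'',s')$ using that a single insertion or deletion changes the length by one and changes the LCS with any fixed string by at most one. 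Either route is routine once the right invariant is identified; the conceptual content is entirely the correspondence between optimal edit sequences and longest common subsequences.
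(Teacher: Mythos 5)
The paper does not prove this lemma at all---it is quoted as a known fact with a pointer to Lemma 12.1 of the cited stringology text---so there is no in-paper argument to compare against. Your proposal is the standard textbook proof and is correct in both directions: the upper bound by deleting the symbols of $s$ not matched to a fixed longest common subsequence and then inserting the unmatched symbols of $s'$, and the lower bound by showing an optimal edit sequence can be normalized so that no inserted symbol is later deleted, whence the surviving original symbols form a common subsequence of size $c$ and $d=(|s|-c)+(|s'|-c)\ge |s|+|s'|-2\,\textup{LCS}(s,s')$. You are right that the only delicate point is the provenance bookkeeping behind the exchange argument; the induction on $d=\ed(s,s')$ that you sketch at the end sidesteps it entirely and is the cleanest way to finish (for a deletion as the first step use $\textup{LCS}(s'',s')\le \textup{LCS}(s,s')$, for an insertion use $\textup{LCS}(s'',s')\le \textup{LCS}(s,s')+1$, and apply the inductive hypothesis to $s''$). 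Either route is a complete and valid proof of the lemma.
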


	We next define Reed-Solomon codes (RS-codes from now on). 
	
	\begin{defi}[Reed-Solomon codes]
		Let $\alpha_1, \alpha_2, \ldots, \alpha_n \in\F_q$ be distinct points in a finite field $\mathbb{F}_q$ of order  $q\geq n$. For $k\leq n$ the $[n,k]_q$ RS-code 
		defined  by the  evaluation set $\{ \alpha_1, \ldots, \alpha_n \}$ is the   set of codewords 
		\[
		\left \lbrace c_f = \left( f(\alpha_1), \ldots, f(\alpha_n) \right) \mid f\in \mathbb{F}_q[x],\deg f < k \right \rbrace \;.
		\]
	\end{defi}
	
	In words, a codeword of an $[n,k]_q$ RS-code is the evaluation vector of some polynomial of degree less than $k$ at $n$ predetermined distinct points. 
	It is well known (and easy to see) that the rate of $[n,k]_q$ RS-code is $k/n$ and the minimal distance, with respect to the Hamming metric,  is $n-k+1$.

\subsection{Previous results}
	\label{sec:prev-results}
	
	Linear codes against worst-case insdel errors were recently studied by
	Cheng, Guruswami, Haeupler, and Li \cite{cheng2020efficient}. Correcting an error in a preceding work, they proved that there are good linear codes against insdel errors. 
	\begin{thm}[Theorem 4.2 in \cite{cheng2020efficient}] \label{thm:random-code}
		For any $\delta > 0$ and prime power $q$, there exists a family of linear codes over $\Fq$ that can correct up to $\delta n$ insertions and deletions, with rate $(1-\delta)/2 - h(\delta)/\log_2 (q)$.
	\end{thm}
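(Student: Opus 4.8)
The plan is to apply the probabilistic method to a random linear code, after recasting the decoding guarantee as a pairwise condition on longest common subsequences. First I would record the standard fact that a code $\cC\subseteq\F_q^n$ corrects $t$ insertions and deletions if and only if every pair of distinct codewords $c,c'$ satisfies $\ed(c,c')\ge 2t+1$: the ``if'' direction is the triangle inequality for edit distance, while the ``only if'' direction follows by feeding the decoder the word obtained from $c$ through the first $\lfloor\ed(c,c')/2\rfloor$ operations of an optimal edit sequence (which lies within $t$ of both $c$ and $c'$). Since all codewords have length $n$, Lemma~\ref{lem:lcs} turns this into $\ed(c,c')=2\bigl(n-\textup{LCS}(c,c')\bigr)\ge 2t+1$, i.e.\ (as this quantity is even) $\textup{LCS}(c,c')\le n-t-1$ for all distinct $c,c'$; we set $t=\delta n$ and $\ell:=n-t=(1-\delta)n$.

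Next I would let $\cC$ be the image of a uniformly random matrix $G\in\F_q^{n\times k}$ with rows $g_1,\dots,g_n$, so that the codeword of $m\in\F_q^k$ is $c^{(m)}=(\langle g_1,m\rangle,\dots,\langle g_n,m\rangle)$, and bound the probability of the bad event that some distinct $m\ne m'$ have $\textup{LCS}(c^{(m)},c^{(m')})\ge\ell$. A common subsequence of length $\ell$ is witnessed by index sets $I=\{i_1<\dots<i_\ell\}$, $J=\{j_1<\dots<j_\ell\}$ with $c^{(m)}_{i_s}=c^{(m')}_{j_s}$ for every $s$, i.e.\ by the linear system $\langle g_{i_s},m\rangle=\langle g_{j_s},m'\rangle$, $s\in[\ell]$, in the i.i.d.\ uniform entries of the $g_v$'s. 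I would union bound over the at most $q^{2k}$ ordered pairs $(m,m')$ and the at most $\binom{n}{t}^2\le 2^{2h(\delta)n}$ choices of $(I,J)$, so everything reduces to showing that each such system has rank exactly $\ell$ and is therefore satisfied with probability exactly $q^{-\ell}$. This rank computation is the step I expect to be the main obstacle (and is, I believe, where the subtlety for \emph{linear} insdel codes sits). Encode the alignment as a directed graph $H$ on $[n]$ with an arc $i_s\to j_s$ for each $s$; since the $i_s$ are distinct and the $j_s$ are distinct, every vertex has in-degree and out-degree at most $1$, so $H$ is a disjoint union of simple directed paths, self-loops, and directed cycles. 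The crucial structural claim is that $H$ has \emph{no} directed cycle of length $\ge 2$: the monotonicity of a common subsequence forces, around any such cycle, the sequence of $I$-ranks of its vertices to be order-isomorphic to its own nontrivial cyclic shift (which, e.g., moves the position of the maximum), impossible for two or more distinct values. Given cycle-freeness, one peels the equations off from the endpoints of each path, using $m\ne m'$ (so at least one of $m,m'$ is nonzero, and each self-loop contributes the nonzero form $\langle g_v,\,m-m'\rangle$) together with the fact that the coefficient of $g_v$ across the equations is a scalar multiple of $m$ (from $v$'s out-arc, if any) minus a scalar multiple of $m'$ (from $v$'s in-arc, if any); this forces every coefficient in a putative linear dependence to vanish, so the $\ell$ forms are independent.

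Putting the pieces together, $\Pr[\text{bad}]\le q^{2k}\cdot 2^{2h(\delta)n}\cdot q^{-(1-\delta)n}=2^{\,n\left((2k/n-(1-\delta))\log_2 q+2h(\delta)\right)}$, which is below $1$ once $k/n<\tfrac{1-\delta}{2}-\tfrac{h(\delta)}{\log_2 q}$; adding the negligible probability ($O(q^{-(n-k)})$, with $n-k\ge n/2$ in this range) that $G$ is not full rank, we conclude that for all large $n$ some choice of $G$ gives an $[n,k]_q$ linear code whose distinct codewords pairwise satisfy $\textup{LCS}\le n-t-1$, hence one that corrects $\delta n$ insdel errors. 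Taking $k=\bigl\lceil\bigl(\tfrac{1-\delta}{2}-\tfrac{h(\delta)}{\log_2 q}\bigr)n\bigr\rceil$ (and a slightly sharper binomial estimate to absorb the rounding) yields the claimed rate, and the resulting sequence of codes is the desired family. The one scenario worth double-checking is the ``degenerate'' pair $m'=\lambda m$, where $c^{(m')}=\lambda c^{(m)}$ and the system involves only a single random codeword; but the same graph argument applies verbatim to $c_{i_s}=\lambda c_{j_s}$ — it remains full rank because $H$ has no long cycle and $\lambda\ne 1$ keeps every self-loop equation nontrivial — so this needs no separate treatment and costs nothing in the rate.
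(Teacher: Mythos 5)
This theorem is not proved in the paper at all: it is quoted from \cite{cheng2020efficient} (Theorem 4.2 there), and the surrounding text only records that the proof is by the probabilistic method applied to a random linear map. Your proposal is, in essence, a correct reconstruction of that argument and follows the same route: reduce correction of $t=\delta n$ insdel errors to the pairwise condition $\textup{LCS}(c,c')\le n-t-1$ via \Cref{lem:lcs}, take a uniformly random generator matrix, union bound over the at most $q^{2k}$ ordered message pairs and the $\binom{n}{\delta n}^2$ alignments, and show that each fixed alignment imposes $\ell=(1-\delta)n$ linearly independent constraints on the entries of $G$, so that it is satisfied with probability exactly $q^{-\ell}$. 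You also correctly isolate the delicate point (precisely the correlation issue that \cite{cheng2020efficient} had to fix in earlier work): when $m'=\lambda m$ the constraints involve a single random codeword, and full rank then hinges on the alignment graph having no directed cycle of length at least $2$, while self-loops stay nontrivial because $\lambda\neq 1$. Two places to tighten. First, the ``order-isomorphic to a cyclic shift'' justification of acyclicity should be made precise; a two-line version: let $v$ be the largest vertex of a putative cycle, with out-arc $s$ (so $i_s=v$ and $j_s<v$) and in-arc $s'$ (so $j_{s'}=v$ and $i_{s'}<v$); then $i_{s'}<i_s$ forces $s'<s$, hence $j_{s'}<j_s$, contradicting $j_{s'}=v>j_s$. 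Second, with $k=\lceil\rho n\rceil$ for $\rho=(1-\delta)/2-h(\delta)/\log_2 q$ the exponent in your union bound is only guaranteed to be $\le 0$, so the polynomial saving $\binom{n}{\delta n}\le 2^{h(\delta)n}/\Omega(\sqrt{n})$ that you allude to (plus the constant factor $q^2$ lost to rounding) is genuinely needed, or else one should state the rate with a $-o(1)$. With these two points made explicit, your argument is complete and matches the approach the paper attributes to \cite{cheng2020efficient}.
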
 
	The proof of \Cref{thm:random-code} uses the probabilistic method, showing that, with high probability, a random linear map generates such code. Complementing their result, they proved that their construction is  almost tight. Specifically, they provided the following upper bound, which they call ``half-Singleton bound,'' that holds over any field.
	\begin{thm}[Half-Singleton bound: Corollary 5.1 in \cite{cheng2020efficient}]
		Every linear insdel code which is capable of correcting a $\delta$ fraction of deletions has rate at most $(1-\delta)/2 + o(1)$.
	\end{thm}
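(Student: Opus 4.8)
The plan is to convert the decoding hypothesis into a statement about longest common subsequences and then force a long common subsequence between two distinct codewords by a dimension count that ``charges every deleted coordinate twice.'' First I would reformulate: writing $t=\lfloor\delta n\rfloor$ and using that all codewords have length $n$, the code corrects $t$ deletions if and only if every pair of distinct codewords $c\neq c'$ satisfies $\mathrm{LCS}(c,c')<n-t$ — indeed a non‑decodable received word is exactly a length‑$(n-t)$ string that is a common subsequence of two distinct codewords, and longer common subsequences only make things worse (\Cref{lem:lcs} lets one pass freely between $\mathrm{LCS}$ and $\ed$).

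The core step: fix an arbitrary set $B=\{v_1<\dots<v_{n-t+1}\}\subseteq[n]$ of size $n-t+1$, and consider the alignment that pairs position $v_s$ of one codeword with position $v_{s+1}$ of another, for $s=1,\dots,n-t$. Since $v_1<\dots<v_{n-t+1}$, this is a legal (strictly increasing on both sides) alignment, so any pair $(c,c')\in\mathcal C\times\mathcal C$ with $c\neq c'$ and $c_{v_s}=c'_{v_{s+1}}$ for all $s$ would witness $\mathrm{LCS}(c,c')\ge n-t$, contradicting the hypothesis. The space $W_B\subseteq\mathcal C\times\mathcal C$ of such pairs is carved out by $n-t$ linear equations inside a $2k$‑dimensional space, so $\dim W_B\ge 2k-(n-t)$; and it must be contained in the diagonal, where $(c,c)\in W_B$ means $c_{v_1}=\dots=c_{v_{n-t+1}}$, i.e. $c$ is constant on $B$. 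Hence, for \emph{every} such $B$,
\[
2k-(n-t)\ \le\ \dim\{\,c\in\mathcal C:\ c|_B\ \text{is constant}\,\}.
\]

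Finally I would choose $B$ well: it suffices to find a single $B$ for which the right‑hand side is $\le 1$, since then $2k\le n-t+1$ and the rate is at most $(1-\delta)/2+O(1/n)=(1-\delta)/2+o(1)$. Suppose instead that every $(n-t+1)$‑set $B$ has $\dim\{c\in\mathcal C:c|_B\text{ constant}\}\ge 2$. Since $\dim(\mathcal C\cap\langle\mathbf 1\rangle)\le 1$, each $B$ then carries a codeword $c_B$ that is constant on $B$ but not globally constant. If the common value of $c_B$ on $B$ is $0$, then $c_B$ is supported on the $(t-1)$‑set $[n]\setminus B$, so $\mathrm{wt}(c_B)\le t-1$ and $\mathrm{LCS}(c_B,0)\ge n-t+1$ — impossible; so scale $c_B$ to be $\equiv 1$ on $B$. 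These scaled codewords cannot all coincide, because a vector that is $1$ on every $(n-t+1)$‑set must be $\mathbf 1$, which is globally constant; so two distinct ones each have at least $n-t+1$ coordinates equal to $1$, and reading off $1^{\,n-t+1}$ as a common subsequence again contradicts the hypothesis. This contradiction yields the desired $B$, and hence the bound.

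The step I expect to require the most care is this last one: the naive, code‑independent choice of $B$ (a fixed arithmetic‑progression pattern, whose ``constant on $B$'' condition is precisely $t$‑periodicity) only delivers rate $\le 1/2$, so the improvement to $(1-\delta)/2$ genuinely hinges on letting $B$ depend on the code and cleanly disposing of the degenerate alternatives (a low‑weight codeword, or a cluster of near‑all‑ones codewords). The remaining fussiness — the floor in $t=\lfloor\delta n\rfloor$, and the fact that when $\mathbf 1\in\mathcal C$ one only gets $2k\le n-t+1$ rather than $2k\le n-t$ — is harmless and is absorbed into the $o(1)$.
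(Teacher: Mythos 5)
Your proof is correct, but note that the paper itself never proves this statement: it is imported verbatim from Cheng--Guruswami--Haeupler--Li (their Corollary 5.1), so there is no in-paper argument to compare against. Your route is the natural ``alignment plus dimension count'' one, and it is closely related in spirit to the paper's Proposition~\ref{prop:cond-for-RS} (two increasing index vectors $I,J$ and a kernel condition), but your key twist is genuinely yours: instead of the standard fixed shift alignment $c_i=c'_{i+t}$, whose kernel meets the diagonal in the (possibly large) space of $t$-periodic codewords, you align consecutive elements of an arbitrary $(n-t+1)$-set $B$, so the diagonal part of $W_B$ becomes the subcode of words constant on $B$, and you then rule out the possibility that this subcode has dimension $\ge 2$ for \emph{every} $B$ by exhibiting forbidden long common subsequences ($0^{\,n-t+1}$ against the zero codeword, or $1^{\,n-t+1}$ between two distinct scaled codewords). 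I checked the three steps --- the LCS reformulation, the bound $\dim W_B\ge 2k-(n-t)$ together with $W_B$ lying in the diagonal, and the final selection of a good $B$ --- and they all hold, including the degenerate cases $t=1$ (where $B=[n]$ forces the constant subcode to have dimension at most $1$ outright) and $\mathbf 1\in\mathcal C$; the slack $2k\le n-t+1$ with $t=\lfloor\delta n\rfloor$ is indeed absorbed into the $o(1)$. Two cosmetic quibbles: the parenthetical claim that a fixed arithmetic-progression $B$ makes ``constant on $B$'' the same as $t$-periodicity is not accurate (the classical shift alignment is not of your consecutive-in-$B$ form), though it is only motivation; and the statement is vacuous when $\delta n<1$ (no deletions), so one should read $\delta$ as fixed and $n$ large, which is how the cited theorem is intended.
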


	The performance of RS-codes against insdel errors was studied much earlier than the recent work of Cheng et al. \cite{cheng2020efficient}.
	To the best of our knowledge,  Safavi-Naini and Wang \cite{safavi2002traitor} were the first to study the performance of RS-codes against insdel errors. They gave an algebraic condition that is sufficient for an RS-code to correct from insdel errors, yet they did not provide any construction. In fact, in our work, we  consider an almost identical algebraic condition, and by simply using the Schwartz-Zippel-Demillo-Lipton lemma, we prove that there are RS-codes that meet this condition and, in addition,   achieve the half-Singleton bound. In particular, RS-codes are optimal for insdel errors (see discussion in \Cref{sec:rs-insdel}). 
	Wang,  McAven, and  Safavi-Naini  \cite{wang2004deletion} constructed a $[5,2]$ RS-code capable of correcting a single deletion. Then, in \cite{tonien2007construction}, Tonien and  Safavi-Naini constructed an $[n,k]$ generalized-RS-codes capable of correcting from $\log_{k+1} n - 1$ insdel errors. Similar to our results, they did not provide an efficient decoding algorithm.

	In another line of  work
 	Duc, Liu, Tjuawinata, and Xing \cite{duc2019explicit},  Liu and Tjuawinata \cite{liu20212}, Chen and Zhang \cite{chen2021improved}, and Liu and Xing \cite{liu2021bounds}  studied the specific case of $2$-dimensional RS-codes. 
	    
	In \cite{duc2019explicit,liu20212}, the authors presented constructions of $[n,2]$ RS-codes that for every $\varepsilon>0$ can correct from $(1-\varepsilon)\cdot n$ insdel errors, for codes of length $n=\poly(1/\varepsilon)$ over fields of size $\Omega(\exp((\log n)^{1/\varepsilon}))$ and $\Omega(\exp(n^{1/\varepsilon}))$, respectively. In \cite{duc2019explicit,chen2021improved}, the authors present constructions of two-dimensional RS-codes that can correct from $n-3$ insdel errors where the field size is exponential in $n$.  After a draft of this work appeared online, 
	Liu and Xing \cite{liu2021bounds} constructed, using different approach than us, a two dimensional RS-codes over that can correct from $n-3$ insdel errors, over a field size $O(n^5)$. Specifically, they prove the following.
	\begin{thm}\cite[Theorem 4.8]{liu2021bounds}
    Let $n\geq 4$. If $q > \frac{n(n-1)^2(n-2)^2}{4}$, then there is an $[n,2]_q$ RS-code, constructed in polynomial time, that can decode from $n-3$ insdel errors.
	\end{thm}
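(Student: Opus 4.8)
The plan is to translate the requirement ``decodes from $n-3$ insdel errors'' into a non-degeneracy condition on the evaluation points, and then to satisfy that condition by choosing the points one coordinate at a time. \emph{Reduction.} Every codeword of an $[n,2]$ RS code has length $n$, so by \Cref{lem:lcs} the code decodes from $n-3$ insdel errors if and only if $\ed(c_f,c_g)\ge 2(n-3)+1$ for all distinct codewords, equivalently $\textup{LCS}(c_f,c_g)\le 2$. A codeword here is the evaluation vector of an affine polynomial $f(x)=a+bx$, so $\textup{LCS}(c_f,c_g)\ge 3$ means there are indices $i_1<i_2<i_3$ and $j_1<j_2<j_3$ in $[n]$ and \emph{distinct} affine $f,g$ with $f(\alpha_{i_\ell})=g(\alpha_{j_\ell})$ for $\ell=1,2,3$. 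Hence it suffices to pick distinct $\alpha_1,\dots,\alpha_n\in\F_q$ so that no such ``bad configuration'' exists.

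\emph{Algebraic reformulation.} Fix a pair of increasing triples $I=(i_1,i_2,i_3)\neq J=(j_1,j_2,j_3)$. The three equations $a_1+b_1\alpha_{i_\ell}=a_2+b_2\alpha_{j_\ell}$ form a homogeneous $3\times 4$ linear system in $(a_1,b_1,a_2,b_2)$ which always admits the family of solutions $a_1=a_2,\ b_1=b_2=0$ (that is, $f=g$ constant). A short argument — first observing that $b_1=0$ or $b_2=0$ already forces $f=g$, then analyzing the kernel of the coefficient matrix — shows that a genuine bad configuration (with $f\neq g$) exists precisely when that matrix has rank at most $2$, equivalently
\[
\det\!\begin{pmatrix}1&\alpha_{i_1}&\alpha_{j_1}\\ 1&\alpha_{i_2}&\alpha_{j_2}\\ 1&\alpha_{i_3}&\alpha_{j_3}\end{pmatrix}
=\pm\bigl[(\alpha_{i_1}-\alpha_{i_2})(\alpha_{j_2}-\alpha_{j_3})-(\alpha_{i_2}-\alpha_{i_3})(\alpha_{j_1}-\alpha_{j_2})\bigr]=0 .
\]
So the task becomes: choose distinct $\alpha_i$ making this determinant nonzero for every pair of distinct increasing triples. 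The step I expect to require the most care is verifying that, for $I\neq J$, this determinant is a \emph{nonzero} polynomial $P_{I,J}$ in the formal variables $x_1,\dots,x_n$ (it is then automatically of degree exactly $2$): writing $P_{I,J}=AB-CD$ with $A=x_{i_1}-x_{i_2}$, $B=x_{j_2}-x_{j_3}$, $C=x_{i_2}-x_{i_3}$, $D=x_{j_1}-x_{j_2}$, unique factorization forces $\{A,B\}=\{C,D\}$ up to sign whenever $P_{I,J}\equiv 0$, and strict monotonicity of $I$ and $J$ (which in particular gives $A\neq\pm C$) then forces $I=J$, a contradiction.

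\emph{Constructing the points.} I would build $\alpha_1,\dots,\alpha_n$ greedily. Suppose distinct $\alpha_1,\dots,\alpha_{m-1}$ avoid every bad configuration supported on indices $<m$; when choosing $\alpha_m$ the only new constraints come from configurations whose largest index equals $m$, i.e. those with $m=i_3$ or $m=j_3$. For each such configuration the restriction of $P_{I,J}$ to the single variable $x_m$ (with the other coordinates set to the already-chosen, pairwise distinct values) has degree at most $1$, and a short case check — using that $\alpha_1,\dots,\alpha_{m-1}$ are distinct and $I\neq J$ — shows it is a nonzero polynomial in $x_m$, hence forbids at most one value. There are at most $2\binom{m-1}{2}\binom{m}{3}$ such configurations, and the distinctness requirement forbids at most $m-1$ further values, so any prime power $q$ exceeding $\max_{m\le n}\bigl(2\binom{m-1}{2}\binom{m}{3}+m-1\bigr)$ admits a valid choice of $\alpha_m$ at every step; evaluating at $m=n$ shows the stated bound $q>\frac{n(n-1)^2(n-2)^2}{4}$ is more than enough. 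The procedure runs in polynomial time, since each step examines only $\poly(n)$ configurations over a field of size $\poly(n)$. (Non-constructively, a uniformly random tuple of distinct points works with positive probability by the Schwartz--Zippel--DeMillo--Lipton lemma applied to $\prod_{I\neq J}P_{I,J}\cdot\prod_{a<b}(x_a-x_b)$.)
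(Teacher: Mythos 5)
Your proposal is correct, but note that this paper never proves the quoted statement at all: it is cited from Liu and Xing \cite{liu2021bounds}, and the paper's own related results are \Cref{prop:cond-for-RS} plus the Schwartz--Zippel argument of \Cref{thm:rs-sz} (which for $k=2$ only gives a non-constructive bound of order $n^6$) and the Sidon-space construction of \Cref{thm:rs-twodim-const} (explicit, field size $O(n^4)$). Your reduction to non-vanishing of the $3\times3$ determinants is exactly the $k=2$ case of \Cref{prop:cond-for-RS}, and your unique-factorization argument that $P_{I,J}\not\equiv 0$ for $I\neq J$ is a clean substitute for the inductive \Cref{prop:formal-det}. Where you genuinely differ is in how the points are found: instead of Sidon spaces or a union bound over the whole tuple, you derandomize greedily, adding one evaluation point at a time and observing that each new pair of triples whose largest index is $m$ rules out at most one value of $\alpha_m$. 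This buys a polynomial-time construction with field size roughly $n(n-1)^2(n-2)^2/6$, i.e.\ it actually recovers (slightly better than) the quoted threshold, whereas the paper's Sidon-space route buys the smaller field $O(n^4)$ at the cost of heavier algebraic machinery. The delicate step you flag does work out: when $m=i_3=j_3$, the restriction of $P_{I,J}$ to $x_m$ is $\bigl((\alpha_{j_1}-\alpha_{j_2})-(\alpha_{i_1}-\alpha_{i_2})\bigr)x_m+\bigl((\alpha_{i_1}-\alpha_{i_2})\alpha_{j_2}-(\alpha_{j_1}-\alpha_{j_2})\alpha_{i_2}\bigr)$, and if the linear coefficient vanishes the constant term becomes $(\alpha_{i_1}-\alpha_{i_2})(\alpha_{j_2}-\alpha_{i_2})$, which is nonzero unless $i_2=j_2$ and then $i_1=j_1$, i.e.\ $I=J$; so at most one value is forbidden in every case. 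Two small touch-ups: only the implication ``all determinants nonzero $\Rightarrow$ no bad configuration'' is needed (your converse additionally uses that the second and third columns coincide only when $I=J$, exactly the caveat in the ``moreover'' part of \Cref{prop:cond-for-RS}); and for the polynomial-time claim you should compute the at most $\poly(n)$ forbidden values explicitly and pick an element outside them, so the running time is polynomial in $n$ and $\log q$ rather than relying on $q$ itself being $\poly(n)$.
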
 

\subsection{Our results}

    First, we prove that there are RS-codes that achieve the half-Singleton bound. Namely, they are optimal linear codes for insdel errors. 
	
	\begin{restatable}{thm}{rssz} \label{thm:rs-sz}
		Let $k$ and $n$ be positive integers such that $2k - 1 \leq n$. 
		For $q = O(n^{4k-2})$ there exists 
		an $[n,k]_q$ RS-code defined by $n$ distinct evaluation points $\alpha_1, \ldots, \alpha_n\in\Fq$, that can recover from $n - 2k + 1$ adversarial insdel errors.
		\end{restatable}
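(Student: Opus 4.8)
The plan is to translate the decoding guarantee into the non‑vanishing of a single explicit polynomial in the evaluation points and then invoke the Schwartz--Zippel--DeMillo--Lipton lemma, with $k$ treated as a constant. First I would unwind the definitions: a code decodes from $t$ insdel errors iff every two distinct codewords are at edit distance larger than $2t$, and for two length‑$n$ codewords this is equivalent, by \Cref{lem:lcs}, to $\textup{LCS}(c_f,c_g)\le n-t-1$, which for $t=n-2k+1$ becomes $\textup{LCS}(c_f,c_g)\le 2k-2$. So it suffices to pick distinct $\alpha_1,\dots,\alpha_n$ so that there is no pair of distinct polynomials $f,g$ of degree $<k$ together with increasing sequences $1\le i_1<\dots<i_{2k-1}\le n$ and $1\le j_1<\dots<j_{2k-1}\le n$ with $f(\alpha_{i_\ell})=g(\alpha_{j_\ell})$ for all $\ell$; I will call such a configuration a \emph{bad pattern} for $(I,J):=\bigl((i_\ell)_\ell,(j_\ell)_\ell\bigr)$.

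Next I would attach a matrix to each pair of index sequences. Writing $F(x,y)=f(x)-g(y)$, a bad pattern for $(I,J)$ is exactly an element of the $(2k-1)$‑dimensional space $\mathcal S=\mathrm{span}\{1,x,\dots,x^{k-1},y,\dots,y^{k-1}\}$ of separated polynomials that vanishes at all $2k-1$ points $(\alpha_{i_\ell},\alpha_{j_\ell})$ and is \emph{not} of the form $h(x)-h(y)$ (the last condition is precisely $f\ne g$). The evaluation map on $\mathcal S$ is given by the square matrix $U_{I,J}$ whose $\ell$‑th row is $(1,\alpha_{i_\ell},\dots,\alpha_{i_\ell}^{k-1},\alpha_{j_\ell},\dots,\alpha_{j_\ell}^{k-1})$. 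Hence: if $\det U_{I,J}(\alpha)\ne0$, the only $F\in\mathcal S$ vanishing at those points is $F=0$, so there is no bad pattern for $(I,J)$; and if at least $k$ of the indices $\ell$ satisfy $i_\ell=j_\ell$ (in particular if $I=J$), then any bad pattern would force $f-g$ to vanish at $\ge k$ distinct points and hence $f=g$, so such $(I,J)$ can never cause trouble, whatever $\alpha$ is.

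The crux — and what I expect to be the main obstacle — is the determinantal claim: if $I\ne J$ and fewer than $k$ of the $\ell$'s satisfy $i_\ell=j_\ell$, then $\det U_{I,J}$ is a nonzero polynomial in $\Fq[\alpha_1,\dots,\alpha_n]$. When $I$ and $J$ are disjoint as index sets this is clean: expanding $\det U_{I,J}$ along its last $k-1$ columns, each term is, up to sign, a Vandermonde determinant in $k$ of the $\alpha_{i_\ell}$'s times $\prod_{\ell\in S}\alpha_{j_\ell}$ times a Vandermonde in $\{\alpha_{j_\ell}:\ell\in S\}$ for a $(k-1)$‑subset $S$; distinct $S$'s involve distinct sets of $\alpha_j$‑variables, all occurring to positive degree, so no two terms share a monomial and nothing cancels. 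The remaining case (at most $k-1$ coincidences $i_\ell=j_\ell$, but $I\cap J$ possibly nonempty) I would handle by a more careful leading‑monomial analysis of the same expansion, or by exhibiting one explicit assignment of distinct $\alpha_m$'s making $U_{I,J}$ nonsingular; the key structural input is that the $2k-1$ difference vectors $(1,\alpha_{i_\ell},\dots,\alpha_{i_\ell}^{k-1})-(1,\alpha_{j_\ell},\dots,\alpha_{j_\ell}^{k-1})$ all have vanishing first coordinate, so that at most $k-1$ of them are independent and the generic rank of $U_{I,J}$ is exactly $2k-1$ whenever $f=g$ is not forced.

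Finally I would set $\mathcal Q(\alpha_1,\dots,\alpha_n)=\bigl(\prod_{1\le m<m'\le n}(\alpha_m-\alpha_{m'})\bigr)\cdot\prod\det U_{I,J}$, the second product ranging over all increasing $I\ne J$ of length $2k-1$ for which fewer than $k$ indices $\ell$ have $i_\ell=j_\ell$. By the previous step $\mathcal Q$ is a nonzero polynomial, and any $\alpha\in\Fq^{\,n}$ with $\mathcal Q(\alpha)\ne0$ has distinct coordinates and admits no bad pattern for any $(I,J)$, hence defines an $[n,k]_q$ RS‑code correcting $n-2k+1$ insdel errors. The Vandermonde factor contributes degree $\binom n2$, each of the at most $\binom{n}{2k-1}^2$ factors $\det U_{I,J}$ has degree at most $k(k-1)$ (every entry has degree $\le k-1$), so $\deg\mathcal Q=O(n^{4k-2})$; the Schwartz--Zippel--DeMillo--Lipton lemma then produces an admissible $\alpha$ over any $\Fq$ with $q>\deg\mathcal Q$, i.e.\ with $q=O(n^{4k-2})$, which completes the argument.
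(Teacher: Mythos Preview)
Your overall plan coincides with the paper's: reduce to the nonsingularity of the matrices $U_{I,J}$ (the paper's $V_{I,J}$), take the product of all the relevant determinants together with the difference product, and apply Schwartz--Zippel--DeMillo--Lipton. The degree bookkeeping and the resulting $q=O(n^{4k-2})$ match as well.

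The gap is exactly at what you correctly flag as the crux: showing that $\det U_{I,J}$ is a nonzero polynomial whenever $I,J$ are increasing and agree on at most $k-1$ coordinates. Your Laplace‑expansion argument is fine when the underlying \emph{index sets} of $I$ and $J$ are disjoint, because then the $\alpha_{j_\ell}$‑variables and $\alpha_{i_\ell}$‑variables are genuinely different and your ``distinct $S$'s give disjoint monomial supports'' claim goes through. But once $I$ and $J$ share values (even with no coordinate agreement, e.g.\ $i_1=j_2$), the same variable appears in both blocks and the no‑cancellation argument breaks down; your proposed fallbacks (``a more careful leading‑monomial analysis'' or ``exhibit one explicit assignment'') are not arguments, and the closing sentence about the difference vectors having vanishing first coordinate does not by itself imply full rank of $U_{I,J}$. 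This is precisely the content of \Cref{prop:formal-det} in the paper, which is proved by induction on $k$: using \Cref{best-lemma} one selects indices $i,j$ so that peeling off the factor $X_{I_i}^{k-1}X_{J_j}^{k-1}$ reduces to a $(2k-3)\times(2k-3)$ instance with one fewer possible agreement, yielding a monomial obtained from a unique permutation (hence with coefficient $\pm1$). You will need either this induction or an equivalent uniquely‑obtained‑monomial argument; the ``disjoint index sets'' case does not cover the general situation.

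A smaller point: you assert $\mathcal Q$ is nonzero ``in $\Fq[\alpha_1,\dots,\alpha_n]$'' and then apply Schwartz--Zippel for $q>\deg\mathcal Q$. Nonvanishing over $\mathbb{Z}$ does not automatically survive reduction mod $p$. The paper's route (\Cref{prop:formal-det}) gives each factor a monomial with coefficient $\pm1$, so each $\det V_{I,J}$ is nonzero over \emph{every} field and hence so is the product; once you have that, the extra primorial argument in the paper is in fact unnecessary. If you complete the determinantal step in a way that yields a $\pm1$‑coefficient monomial, this issue disappears; otherwise you must say something about the characteristic.
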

		

	
	Observe that the constructed code achieves the half Singleton bound: its rate is $\mathcal{R} = k/n = (1-\delta)/2 + o(1)$ and $\delta=(n - 2k + 1)/n$. 
	
	\Cref{thm:rs-sz} is an existential result and does not give an explicit construction. Using ideas from number theory and algebra, we construct RS-codes that can decode from $n-2k+1$ 
	adversarial insdel errors, in particular, they achieve the half-Singleton bound. Specifically,
	
	\begin{restatable}{thm}{rsExpConst} \label{thm:rs-explicit-const}
		Let $k$ and $n$ be positive integers, where $2k - 1 \leq n$. There is a deterministic construction of an $[n,k]_q$ RS-code that can correct from $n-2k+1$  insdel errors where $q = O\left(n^{k^2 \cdot ((2k)!)^2}\right)$. The construction runs in polynomial time for $k = O(\log(n)/\log(\log(n)))$.
	\end{restatable}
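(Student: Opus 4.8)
The plan is to first reduce the statement to a purely algebraic non-vanishing condition on the evaluation points, then to construct explicit points satisfying it over a number field, and finally to descend to a finite field. For the reduction, recall the algebraic criterion for insdel-decoding of RS-codes (a slight variant of the Safavi-Naini--Wang condition, exactly as used in the proof of \Cref{thm:rs-sz}): there is an explicit polynomial $\Psi=\Psi_{n,k}\in\mathbb{Z}[x_1,\dots,x_n]$, which is a product of $n^{O(k)}$ ``interleaving'' determinants $\Delta_{I,J}$ --- one for each pair of increasing index sequences $I,J$ of length $2k-1$ that could be the positions of a length-$(2k-1)$ common subsequence of two distinct codewords --- where each $\Delta_{I,J}$ is the determinant of an $O(k)\times O(k)$ structured (``doubled Vandermonde'') matrix formed from the variables $\{x_i,x_j\}$ and the powers $0,\dots,k-1$, hence of degree $\mathrm{poly}(k)$; and the $[n,k]$ RS-code with evaluation tuple $\bar\alpha=(\alpha_1,\dots,\alpha_n)$ corrects $n-2k+1$ insdel errors whenever $\Psi(\bar\alpha)\neq0$. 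Thus $\deg\Psi=n^{O(k)}$ and it suffices to produce, over a field that is not too large, an explicit $\bar\alpha$ with $\Psi(\bar\alpha)\neq0$; the probabilistic argument of \Cref{thm:rs-sz} does this non-constructively over a field of size $n^{O(k)}$, but the naive derandomizations (searching the field, or a Kronecker substitution collapsing to one variable) are either too slow or force the field size to be exponential in $n$.

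The heart of the construction is therefore to choose $\bar\alpha$ in closed form inside the ring of integers $\mathcal{O}_K$ of a suitable number field $K$ so that the non-vanishing of \emph{every} factor $\Delta_{I,J}$ is forced by one and the same arithmetic reason, uniformly over the pattern $(I,J)$. Uniformity is essential rather than cosmetic: there are $n^{\Theta(k)}$ patterns, already super-polynomially many once $k$ exceeds a constant, so the construction must never examine them one by one. One takes the $\alpha_i$ to be explicit algebraic numbers --- for instance the images in a carefully chosen extension of $i$, or of the $i$-th prime, or quantities built from the roots of one fixed auxiliary polynomial --- whose mutual algebraic and arithmetic relations are transparent enough that each $\Delta_{I,J}$ is \emph{provably} a nonzero element of $K$. (Such $\bar\alpha$ exists because the $\alpha_i$ can be selected one at a time, each avoiding the roots of a polynomial of degree $\mathrm{poly}(k)$ determined by the previous choices; the substance of the theorem is to exhibit an explicit formula that short-circuits this greedy process.) Running the degree-and-height bookkeeping through the $n$ coordinates --- the field degree $[K:\mathbb{Q}]$ and the logarithmic heights $h(\alpha_i)$ each grow by factors controlled by the determinant sizes, in particular by $(2k)!$, the bound on the number of monomials in the Leibniz expansion of an $O(k)\times O(k)$ determinant --- is exactly what yields the exponent $k^{2}\cdot((2k)!)^{2}$.

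It remains to pass to a finite field. Choose a rational prime $p$ that is unramified in $K$, is at least $n$, and does not divide the \emph{nonzero} integer $N:=\mathrm{Nm}_{K/\mathbb{Q}}\!\left(\prod_{(I,J)}\Delta_{I,J}(\bar\alpha)\right)$; reducing modulo a prime $\mathfrak{p}\mid p$ then sends the $\alpha_i$ to distinct elements of the residue field $\mathbb{F}_q$ (with $q=p^{f}$, $f\le[K:\mathbb{Q}]$) at which all the $\Delta_{I,J}$ are still nonzero, so the resulting RS-code over $\mathbb{F}_q$ corrects $n-2k+1$ insdel errors. Since $\log|N|$ is bounded in terms of $\deg\Psi=n^{O(k)}$, $[K:\mathbb{Q}]$, and $\max_i h(\alpha_i)$, such a $p$ exists with $q=p^{O([K:\mathbb{Q}])}=O\!\left(n^{k^{2}\,((2k)!)^{2}}\right)$. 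The construction then consists of writing down the closed-form $\alpha_i$, searching for a valid $p$ (primality and the divisibility test cost $\mathrm{poly}(\log q)$), and building $\mathbb{F}_q$ via an irreducible polynomial of degree $f$ over $\mathbb{F}_p$ (also $\mathrm{poly}(\log q)$); crucially it never enumerates the $n^{\Theta(k)}$ patterns. When $k=O(\log n/\log\log n)$ we have $\log((2k)!)=O(k\log k)=O(\log n)$, so $k^{2}((2k)!)^{2}=n^{O(1)}$ and $\log q=n^{O(1)}$, and the whole procedure runs in $\mathrm{poly}(n)$ time.

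The main obstacle is the second step: designing an explicit, arithmetically transparent family $\alpha_1,\dots,\alpha_n$ for which a \emph{single} uniform argument certifies $\Delta_{I,J}(\bar\alpha)\neq0$ for all interleaving patterns at once. Overly symmetric choices break down --- for example $\alpha_i=\zeta^{\,i}$ for a root of unity $\zeta$ makes the matrix of the shift pattern $I=(1,2,\dots,2k-1)$, $J=(2,3,\dots,2k)$ rank-deficient, so the code fails to correct even a single insdel error --- and the task is to find a family dodging all such degeneracies while keeping $[K:\mathbb{Q}]$ and $\max_i h(\alpha_i)$ small enough that the descent to $\mathbb{F}_q$ costs only $q=n^{k^{O(k)}}$ rather than, say, a tower of exponentials.
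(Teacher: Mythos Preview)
The sketch correctly isolates the reduction to the non-vanishing of the determinants $\Delta_{I,J}$ and correctly diagnoses that the crux is a \emph{uniform} argument certifying $\Delta_{I,J}(\bar\alpha)\neq 0$ for all $n^{\Theta(k)}$ patterns simultaneously. But two genuine gaps remain, and the route you outline is not the one the paper takes.

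First, step two---the explicit family $\alpha_1,\dots,\alpha_n$---is not actually supplied. The phrases ``for instance the images \dots\ of $i$, or of the $i$-th prime, or quantities built from the roots of one fixed auxiliary polynomial'' are placeholders, and your final paragraph concedes this is ``the main obstacle''. The paper's choice is completely concrete and works directly over the finite field, with no number-field detour: set $\ell=((2k)!)^2$, pick any prime $p>\max(n,k^2\ell)$, let $\gamma$ generate a degree-$k^2\ell$ extension $\F_q$ of $\F_p$, and take $\alpha_i=(\gamma-i)^{\ell}$. The uniform non-vanishing argument is the Mason--Stothers theorem (the polynomial \emph{abc} theorem, \cite{vaserstein2003vanishing}): if some $\det V_{I,J}(\alpha)$ vanished as a polynomial in $\gamma$ over $\F_p$, then after extracting a minimal linear dependence among the Leibniz terms $P_0,\dots,P_m$ one gets coprime polynomials with at most $2k-2$ distinct roots each but degree at least $\ell$, contradicting the abc degree bound. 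This degree-versus-radical tension is the engine of the proof and is entirely absent from your sketch; it is also what pins down the exponent $k^2((2k)!)^2$ (the $((2k)!)^2$ is the power $\ell$, and $k^2$ is the extension degree needed so that degree-$\ell k(k-1)$ polynomial identities in $\gamma$ over $\F_p$ lift from $\F_q$ to $\F_p[\gamma]$).

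Second, your descent step does not run in the time you claim. To test $p\nmid N=\mathrm{Nm}\bigl(\prod_{I,J}\Delta_{I,J}(\bar\alpha)\bigr)$ you must compute $N\bmod p$, hence evaluate all $n^{\Theta(k)}$ factors---exactly the enumeration you say must be avoided; and taking instead $p>|N|$ gives $\log p=n^{\Theta(k)}$, which is far too large. The paper avoids this entirely: since the abc argument already works modulo \emph{any} sufficiently large prime, there is no search for a good prime and no reduction step---one simply writes down the $\alpha_i$, which is trivially polynomial in $\log q$ once an irreducible of degree $k^2\ell$ over $\F_p$ is found.
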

	
	We note that for $k = \omega(\log(n)/\log\log(n))$ the field size is $\exp(n^{\omega(1)})$ and in particular, there is no efficient way to represent arbitrary elements of $\F_q$ in this case.

As discussed before, special attention was given in the literature to the case of two dimensional RS-codes. By using Sidon spaces that were constructed in \cite{roth2017construction}, we  explicitly construct a family of $[n,2]_q$ RS-codes that can decode from $n-3$ insdel errors for $q=O(n^4)$. 
	Besides improving on all previous constructions in terms of field size, our construction also requires a smaller field size than the one  guaranteed by the randomized argument in \Cref{thm:rs-sz}. Such phenomena, where a deterministic algebraic   construction outperforms the parameters obtained by a randomized construction, are scarce  in coding theory and  combinatorics.
	Well-known examples are AG codes that outperform the GV-bound \cite{tsfasman1982modular} and   constructions of extremal graphs with ``many'' edges that do not contain cycles of length $4$, $6$ or $10$ (see \cite{conlon}).

	\begin{restatable}{thm}{rsTwoDimConst}
		\label{thm:rs-twodim-const}
		For any $n\geq 4$, there exists an explicit $[n,2]_{q}$ RS-code that can correct from $n - 3$  insdel errors, where $q = O(n^4)$. 
	\end{restatable}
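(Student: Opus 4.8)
I would first convert the insdel--decoding requirement into a clean algebraic condition on the evaluation points, and then meet that condition with the help of a Sidon space. Since edit distance is a metric, an RS-code corrects $t$ insdel errors iff every two distinct codewords are at edit distance $\ge 2t+1$; by \Cref{lem:lcs} (codewords have length $n$) this is equivalent to $\textup{LCS}(c_f,c_g)\le n-t-1$ for all $f\neq g$ with $\deg f,\deg g<2$, i.e. $\textup{LCS}(c_f,c_g)\le 2$ when $t=n-3$. A common subsequence of length $3$ of $c_f$ and $c_g$ is precisely a pair of increasing index triples $i_1<i_2<i_3$, $j_1<j_2<j_3$ with $f(\alpha_{i_\ell})=g(\alpha_{j_\ell})$ for $\ell=1,2,3$. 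Writing $f=a_1x+b_1$, $g=a_2x+b_2$, these equalities say $(a_1,-a_2,b_1-b_2)$ lies in the kernel of the $3\times 3$ matrix $M$ with rows $(\alpha_{i_\ell},\alpha_{j_\ell},1)$; the only kernel direction forcing $f=g$, namely $(1,-1,0)$, lies in $\ker M$ iff the triples coincide. Hence, restricting to \emph{distinct} triples, one has $\textup{LCS}(c_f,c_g)\le 2$ for all $f\neq g$ iff $\det M\neq 0$ for every pair of distinct increasing triples. Row-reducing $M$ gives $\det M=(\alpha_{i_2}-\alpha_{i_1})(\alpha_{j_3}-\alpha_{j_1})-(\alpha_{j_2}-\alpha_{j_1})(\alpha_{i_3}-\alpha_{i_1})$, so the code corrects $n-3$ insdel errors iff
\[(\alpha_{i_2}-\alpha_{i_1})(\alpha_{j_3}-\alpha_{j_1})\neq(\alpha_{j_2}-\alpha_{j_1})(\alpha_{i_3}-\alpha_{i_1})\]
for all distinct increasing triples (the $k=2$ instance of the condition behind the general theorems).

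\textbf{The construction.} Let $q$ be the least prime power with $q\ge n$, let $V\subseteq\mathbb{F}_{q^4}$ be an explicit $2$-dimensional $\mathbb{F}_q$-linear Sidon space (provided by \cite{roth2017construction}), fix an $\mathbb{F}_q$-basis $\{v_1,v_2\}$ of $V$, pick distinct $e_1,\dots,e_n\in\mathbb{F}_q$, and put $\alpha_i:=e_iv_1+e_i^2v_2$. These are distinct elements of $\mathbb{F}_{q^4}$ and define an $[n,2]_{q^4}$ RS-code over a field of size $q^4=O(n^4)$. The idea is that the $\alpha_i$ lie on a rational normal curve inside $V$, so all differences land in $V$ and the Sidon property bites on the displayed product equation.

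\textbf{Verification.} Each difference is $\alpha_a-\alpha_b=(e_a-e_b)\bigl(v_1+(e_a+e_b)v_2\bigr)\in V\setminus\{0\}$, with $\mathbb{F}_q$-span $\langle v_1+(e_a+e_b)v_2\rangle$, and two such spans coincide iff $e_a+e_b=e_c+e_d$. Suppose the displayed inequality fails, i.e. $AB=CD$ with $A=\alpha_{i_2}-\alpha_{i_1}$, $B=\alpha_{j_3}-\alpha_{j_1}$, $C=\alpha_{j_2}-\alpha_{j_1}$, $D=\alpha_{i_3}-\alpha_{i_1}$, all in $V\setminus\{0\}$. The Sidon property forces $\{A\mathbb{F}_q,B\mathbb{F}_q\}=\{C\mathbb{F}_q,D\mathbb{F}_q\}$, so either (i) $A\mathbb{F}_q=D\mathbb{F}_q$, or (ii) $A\mathbb{F}_q=C\mathbb{F}_q$ and $B\mathbb{F}_q=D\mathbb{F}_q$. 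Case (i) gives $e_{i_1}+e_{i_2}=e_{i_1}+e_{i_3}$, i.e. $i_2=i_3$, impossible. In case (ii), writing $A=\lambda C$ and using $AB=CD$ gives $D=\lambda B$ for the same $\lambda\in\mathbb{F}_q^*$, whence $\alpha_{i_\ell}=\lambda\alpha_{j_\ell}+\nu$ for $\ell=1,2,3$ with $\nu=\alpha_{i_1}-\lambda\alpha_{j_1}\in V$. Writing $\nu=\nu_1v_1+\nu_2v_2$ and comparing $v_1$- and $v_2$-coordinates yields $e_{i_\ell}=\lambda e_{j_\ell}+\nu_1$ and $e_{i_\ell}^2=\lambda e_{j_\ell}^2+\nu_2$; eliminating $e_{i_\ell}$ shows the quadratic $(\lambda^2-\lambda)x^2+2\lambda\nu_1 x+(\nu_1^2-\nu_2)$ vanishes at the three distinct points $e_{j_1},e_{j_2},e_{j_3}$, hence is identically zero. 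In odd characteristic this forces $\lambda=1$ and $\nu_1=\nu_2=0$, so $\alpha_{i_\ell}=\alpha_{j_\ell}$ and the triples coincide --- a contradiction. Thus the code corrects $n-3$ insdel errors.

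\textbf{The main obstacle, and technicalities.} The crux is case (ii): a Sidon space only guarantees that a product determines its factors \emph{up to $\mathbb{F}_q$-scalars}, which still allows the two triples to differ by an $\mathbb{F}_q$-affine similarity $x\mapsto\lambda x+\nu$; the whole point of placing the $\alpha_i$ on the curve $e\mapsto(e,e^2)$ inside $V$ is that this curve has no nontrivial such self-similarity carrying three of its points, in order, to three others, and the degree argument above is exactly what extracts this. Triples that share some indices cause no trouble, since only ``distinct as ordered triples'' is ever used; characteristic $2$ must be handled separately (there $2\lambda\nu_1=0$ is vacuous), e.g. by using $e\mapsto(e,e^3)$ and additionally requiring that no three $e_i$ sum to $0$. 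One also needs the Sidon space of \cite{roth2017construction} to be computable in $\mathrm{poly}(n)$ time over $\mathbb{F}_{q^4}$ with $q=O(n)$, which it is. (I would also note that a Cauchy-type choice $\alpha_i=1/(\gamma-e_i)$ with $\gamma$ of degree $3$ over $\mathbb{F}_q$ plausibly pushes the field size down to $O(n^3)$, matching the lower bound, since there the displayed inequality reduces to a degree-$2$ polynomial identity in $\gamma$.)
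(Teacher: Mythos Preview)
Your argument is correct (in odd characteristic; the characteristic~$2$ caveat is easily sidestepped by choosing $q$ to be an odd prime in $[n,2n]$, which keeps $q^4=O(n^4)$), and it takes a genuinely different route from the paper.

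The paper works over the \emph{small} base field $\mathbb{F}_3$: it takes a $2m$-dimensional Sidon space $S\subset\mathbb{F}_{3^{4m}}$ from \cite{roth2017construction} and chooses the evaluation points as the $n=(3^m+1)/2$ columns of a parity-check matrix of a ternary code of minimum distance~$5$ (from \cite{gashkov1986linear}), written in the basis of $S$. The distance-$5$ property guarantees that any four evaluation points are $\mathbb{F}_3$-linearly independent; then the Sidon factorisation of $(y_1-y_2)(x_2-x_3)=(y_2-y_3)(x_1-x_2)$ immediately yields a nontrivial $\mathbb{F}_3$-linear relation among at most four of the $\alpha_i$'s, a contradiction. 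In other words, the paper kills the Sidon ambiguity $\{A\mathbb{F}_3,B\mathbb{F}_3\}=\{C\mathbb{F}_3,D\mathbb{F}_3\}$ in one stroke via linear independence, at the cost of importing a second ingredient (the distance-$5$ code).

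You instead work over the \emph{large} base field $\mathbb{F}_q$ with $q\approx n$, take a $2$-dimensional Sidon space $V\subset\mathbb{F}_{q^4}$, and place the $\alpha_i$ on the conic $e\mapsto(e,e^2)$ inside $V$. Here any four points are certainly $\mathbb{F}_q$-linearly \emph{dependent} (they sit in a $2$-dimensional space), so the paper's shortcut is unavailable; your case~(ii) is exactly the extra work needed, showing that the only $\mathbb{F}_q$-affine self-map of the conic fixing three ordered points is the identity. This trades the auxiliary code for a short degree argument, and the resulting construction is arguably more self-contained: just a Sidon space and a Vandermonde-type rigidity of $(e,e^2)$. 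Both approaches land at field size $O(n^4)$. Your closing remark about $1/(\gamma-e_i)$ over a cubic extension is intriguing but does not obviously go through: after clearing denominators the determinant becomes a degree-$2$ polynomial in $\gamma$ whose coefficients are \emph{not} forced to vanish individually, since the evaluation points themselves live in $\mathbb{F}_{q^3}$ rather than $\mathbb{F}_q$.
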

	
	We also prove a (very) weak lower bound on the field size. 
	
	\begin{restatable}{prop}{RSLowerBound}
	\label{prop:lower-bound}
		Any $[n,k]_q$ RS-code that can correct from $n - 2k +1$ worst case insdel errors must satisfy
		\[q \geq \frac{1}{2} \cdot  \left(\frac{n}{(2k-1)(k-1)}\right)^{\frac{2k-1}{k-1}} \;. \]		\end{restatable}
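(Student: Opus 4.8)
The plan is to prove the contrapositive: assuming that some $[n,k]_q$ RS-code $\cC$ with distinct evaluation points $\alpha_1,\dots,\alpha_n$ corrects $n-2k+1$ worst-case insdel errors, I will bound $q$ from below. Throughout $k\ge 2$ (for $k=1$ the expression is not even defined), and I may assume $n\ge (2k-1)(k-1)$, since otherwise $\frac{n}{(2k-1)(k-1)}<1$, so the claimed right-hand side is below $1$ while $q\ge 2$, and there is nothing to prove. The first observation is that if two distinct codewords $c_f\ne c_g$ (with $\deg f,\deg g<k$) had a common subsequence $w$ of length $2k-1$, then $w$ would be obtainable from each of $c_f$ and $c_g$ by exactly $n-2k+1$ deletions, so a decoder receiving $w$ could not decide whether $f$ or $g$ was sent; hence $\cC$ correcting $n-2k+1$ insdel errors forces $\textup{LCS}(c_f,c_g)\le 2k-2$ for every $f\ne g$ — equivalently $\ed(c_f,c_g)\ge 2n-4k+3$ by \Cref{lem:lcs}, but it is the LCS formulation I will use.

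The key step is a packing argument. For a polynomial $f$ of degree $<k$ write $S(f)\subseteq \F_q^{2k-1}$ for the set of all length-$(2k-1)$ subsequences of the string $c_f$. By the previous paragraph $S(f)\cap S(g)=\emptyset$ whenever $f\ne g$, since a common element would be a common subsequence of length $2k-1$. All these sets lie inside $\F_q^{2k-1}$, and there are $q^{k}-q\ge q^k/2$ non-constant polynomials of degree $<k$, so $\tfrac{q^k}{2}\cdot\min_{f\text{ non-const}}|S(f)|\le\sum_{f\text{ non-const}}|S(f)|\le q^{2k-1}$, and therefore some non-constant $f$ satisfies $|S(f)|\le 2q^{k-1}$.

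Now I lower-bound $|S(f)|$ for this non-constant $f$. Since $\cC$ has minimum distance $n-k+1$, every nonzero codeword has at most $k-1$ zero coordinates; applying this to $c_f$ minus a constant codeword shows that $f$ attains each value at most $k-1$ times on $\{\alpha_1,\dots,\alpha_n\}$, so $c_f$ has at least $\ceil{n/(k-1)}$ distinct symbols. Retaining only the first occurrence of each symbol of $c_f$ produces a subsequence $\tau$ of $c_f$, of some length $L\ge\ceil{n/(k-1)}$, whose entries are pairwise distinct. Each of the $\binom{L}{2k-1}$ choices of $2k-1$ positions of $\tau$ yields a length-$(2k-1)$ subsequence of $c_f$, and distinct choices yield distinct vectors in $\F_q^{2k-1}$ precisely because the entries of $\tau$ are distinct; hence $|S(f)|\ge\binom{L}{2k-1}\ge\binom{\ceil{n/(k-1)}}{2k-1}\ge\bigl(\tfrac{n}{(k-1)(2k-1)}\bigr)^{2k-1}$, where the last inequality uses $\binom{N}{r}\ge(N/r)^r$ together with $\ceil{n/(k-1)}\ge n/(k-1)\ge 2k-1$. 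Combining with $|S(f)|\le 2q^{k-1}$ gives $\bigl(\tfrac{n}{(2k-1)(k-1)}\bigr)^{2k-1}\le 2q^{k-1}$, and taking a $(k-1)$-th root (using $(1/2)^{1/(k-1)}\ge 1/2$ for $k\ge2$) yields exactly $q\ge\tfrac12\bigl(\tfrac{n}{(2k-1)(k-1)}\bigr)^{(2k-1)/(k-1)}$.

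I do not expect a real obstacle; the argument is short and elementary. The two points worth stating carefully are the reduction in the first paragraph — that a shared length-$(2k-1)$ subsequence genuinely defeats the decoder, because it is reachable from each codeword by $n-2k+1$ deletions — and the injectivity used at the end, namely that distinct $(2k-1)$-subsets of the positions of the first-occurrence subsequence $\tau$ produce distinct value-vectors, which is exactly what turns the disjointness of the sets $S(f)$ into a lower bound on $q$. The single edge case, $n<(2k-1)(k-1)$, is disposed of up front since the claimed bound is then less than $1$.
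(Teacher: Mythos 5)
Your proof is correct and follows essentially the same approach as the paper: both arguments pack the pairwise-disjoint sets of length-$(2k-1)$ subsequences of the codewords into $\F_q^{2k-1}$ and lower-bound each set's size using the fact that a non-constant $f$ of degree $<k$ takes each value at most $k-1$ times. The only (cosmetic) difference is in that lower bound—you extract a distinct-valued first-occurrence subsequence of length $L\geq n/(k-1)$ and count its $\binom{L}{2k-1}$ sub-vectors, while the paper counts all $\binom{n}{2k-1}$ index tuples and divides by the multiplicity $(k-1)^{2k-1}$—both yielding the same bound $\left(\frac{n}{(2k-1)(k-1)}\right)^{2k-1}$ and hence the same conclusion.
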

	
	While for large values of $k$, this bound is meaningless, it implies that when $k=2$, the field size must be $\Omega(n^3)$. Thus, the construction given in \Cref{thm:rs-twodim-const} is nearly optimal.
	 The gap between the field size in our construction and the one implied by the lower bound raises an interesting question: what is the minimal field size $q$ for which an optimal $[n,2]_q$ RS-code exists?

    \subsection{Proof idea}
    	To show that RS-codes can be used against insdel errors, we first prove an  algebraic condition that is sufficient for $n$ evaluation points to define an RS-code that can decode from insdel errors. This condition requires that a certain set of $n^{O(k)}$ matrices, determined by the evaluation points, must all have full rank. Then, a simple application of the Schwartz-Zippel-DeMilo-Lipton lemma \cite{DBLP:journals/jacm/Schwartz80,Zippel79,DBLP:journals/ipl/DemilloL78} implies the existence of good evaluation points over fields of size $n^{O(k)}$. To obtain a deterministic construction, we show that by going to much larger field size, one can find evaluation points satisfying the full-rank condition. While the field size needs to be of size roughly $\Omega(n^{k^k})$, we note that, for not too large values of $k$, it is of exponential size, and in this case, our construction runs in polynomial time. A key ingredient in the analysis of this construction is our use of the `abc theorem' for polynomials over finite fields \cite{vaserstein2003vanishing}. 
    	
	For the case of $k=2$, we use a different idea that gives a better field size than the one implied by the probabilistic argument above. We do so by noting that in this case the full-rank condition can be expressed as the requirement that no two different  triples of  evaluation points $(x_1,x_2,x_3)$ and $(y_1,y_2,y_3)$ satisfy 
		\[
		\frac{y_1 - y_2}{x_1 - x_2} = \frac{y_2 - y_3}{x_2 - x_3} \;.
		\]
	This condition is reminiscent of the condition behind the construction of Sidon spaces of \cite{roth2017construction}, and indeed, we build on their construction of Sidon spaces to define good evaluation points in a field of size $O(n^4)$.

	\subsection{Organization}
	The paper is organized as follows.
	In \Cref{sec:rs-insdel}, we prove \Cref{thm:rs-sz}. 
	In \Cref{sec:det-any-k}, we prove \Cref{thm:rs-explicit-const}. Finally, in \Cref{sec:det-k-2}, we prove  \Cref{thm:rs-twodim-const} and \Cref{prop:lower-bound}. \Cref{sec:open-que} is devoted to conclusion and open questions.

\section{Reed-Solomon codes achieving the half-Singleton bound}
	\label{sec:rs-insdel}
	In this section, we  prove our results concerning RS-codes. Specifically, we prove that RS-codes achieve the half-Singleton bound and give some explicit constructions. The proofs  will follow by standard analysis of the LCS between any two distinct codewords. 
	
	We begin by reformulating the condition on the maximum length of an LCS as an algebraic condition (invertibility of certain matrices). Then we show that an RS-code that satisfies this condition  would have the maximum possible  edit distance and hence would be able to decode from the maximum number of insdel errors. 
	We remark that a similar approach already appeared in \cite[Section 2.2]{safavi2002traitor} and we shall repeat some of the details here.

\subsection{An algebraic condition}	

	%

	The following proposition is the main result of this section as it  provides a sufficient condition for an RS-code to recover from the maximum number of insdel errors. 
	We first make the following definitions:
		We say that a vector of indices $I\in [n]^s$ is an \emph{increasing} vector if its coordinates are monotonically increasing, i.e., for any  $1\leq i<j\leq s$, $I_i<I_j$, where $I_i$ is the $i$th coordinate of $I$.  For a codeword $c$ of length $n$ and an increasing vector $I$, let $c_I$ be the restriction of $c$ to the coordinates with indices in $I$, i.e., $c_I=(c_{I_1},\ldots,c_{I_s})$. 
	 For two vectors $I,J\in [n]^{2k-1}$  with distinct coordinates we define the following (variant of a) vandermonde matrix of order $(2k-1)\times (2k-1)$ in the formal variables $\bX=(X_1,\ldots,X_n)$:

    \begin{equation} \label{eq:mat-lcs-eq}
	V_{I,J}(\bX)=\begin{pmatrix} 
	1 & X_{I_1} & \ldots & X_{I_1}^{k-1}  & X_{J_1} &\ldots & X_{J_1}^{k-1} \\ 
	1 & X_{I_2} & \ldots & X_{I_2}^{k-1}  & X_{J_2} &\ldots & X_{J_2}^{k-1} \\
	\vdots &\vdots & \ldots &\vdots &\vdots &\ldots &\vdots \\
	1 & X_{I_{2k-1}} & \ldots & X_{I_{2k-1}}^{k-1}  & X_{J_{2k-1}} &\ldots & X_{J_{2k-1}}^{k-1}\\
	\end{pmatrix} .
	\end{equation}


	\begin{prop} \label{prop:cond-for-RS}
		Consider the $[n,k]_q$ RS-code defined by an evaluation vector   $\alpha=(\alpha_1,\ldots,\alpha_n)$.  
			If for every two increasing vectors $I,J\in [n]^{2k-1}$ that agree on at most $k-1$ coordinates, it holds that $\det(V_{I,J}(\alpha)) \neq 0$, then the code can correct any $n-2k+1$ insdel errors.
		Moreover, if the code can correct  any $n-2k+1$ insdel errors, then the only possible vectors in $\text{Kernel}\left(V_{I,J}(\alpha)\right)$ are of the form $(0,f_1,\ldots,f_{k-1},-f_1,\ldots,-f_{k-1})$.
	\end{prop}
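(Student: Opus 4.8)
The plan is to route both parts through the combinatorics of common subsequences. The preliminary fact I rely on is the folklore equivalence that a code decodes from $t$ insertions and deletions if and only if every two distinct codewords are at edit distance greater than $2t$ (one direction is the triangle inequality for edit distance; for the other, the ``midpoint'' of an optimal edit transcript between two close codewords is a word reachable from each of them by at most $t$ insdel operations, so unique decoding fails). Since all codewords of an $[n,k]_q$ RS-code have length $n$, combining this with \Cref{lem:lcs} shows that the code corrects $n-2k+1$ insdel errors if and only if $\textup{LCS}(c_f,c_g)\le 2k-2$ for all distinct polynomials $f,g$ of degree $<k$, i.e.\ no two distinct codewords share a common subsequence of length $2k-1$.

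The second ingredient is a dictionary between such common subsequences and the kernels of the matrices $V_{I,J}(\alpha)$. A common subsequence of length $2k-1$ of $c_f$ and $c_g$ is precisely a choice of increasing $I,J\in[n]^{2k-1}$ with $f(\alpha_{I_\ell})=g(\alpha_{J_\ell})$ for all $\ell\in[2k-1]$. Writing $f=\sum_{i=0}^{k-1}f_i x^i$ and $g=\sum_{i=0}^{k-1}g_i x^i$, this system of equations is exactly $V_{I,J}(\alpha)\cdot(f_0-g_0,\,f_1,\dots,f_{k-1},\,-g_1,\dots,-g_{k-1})^{T}=0$, the asymmetry in the zeroth coordinate coming from the fact that $V_{I,J}$ has a single all-ones column. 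I then record two observations: (a) this vector vanishes iff $f=g$ (both then being the constant $f_0=g_0$); and (b) if $f\ne g$ then $I$ and $J$ agree on at most $k-1$ coordinates, since agreement on $k$ coordinates would force $f-g$ to vanish at $k$ distinct evaluation points, hence identically.

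For the first statement I argue by contraposition: if the code fails to correct some pattern of $n-2k+1$ insdel errors, then two distinct codewords $c_f\ne c_g$ share a common subsequence of length $2k-1$ (truncate a longest common one), which by (b) yields increasing $I,J$ agreeing on at most $k-1$ coordinates and, by (a), a nonzero vector in $\text{Kernel}(V_{I,J}(\alpha))$, so $\det(V_{I,J}(\alpha))=0$, contradicting the hypothesis. For the ``moreover'' part I run the dictionary backwards, for arbitrary increasing $I,J\in[n]^{2k-1}$: given $v=(a,b_1,\dots,b_{k-1},c_1,\dots,c_{k-1})\in\text{Kernel}(V_{I,J}(\alpha))$, set $f(x)=a+\sum_{i\ge 1}b_i x^i$ and $g(x)=-\sum_{i\ge 1}c_i x^i$, so that $(c_f)_I=(c_g)_J$ is a common subsequence of length $2k-1$; decodability then forces $f=g$, and matching coefficients gives $a=0$ and $c_i=-b_i$, i.e.\ $v=(0,f_1,\dots,f_{k-1},-f_1,\dots,-f_{k-1})$ with $f_i:=b_i$.

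Once this correspondence is in place the argument is bookkeeping; the only step that is more than bookkeeping is the equivalence ``decodes from $t$ insdels $\Leftrightarrow$ minimum edit distance exceeds $2t$'', which I would either cite or prove in two lines via the midpoint construction. The one spot demanding care is the constant term: because $V_{I,J}$ carries a single ones-column rather than two, a kernel vector has the form $(f_0-g_0,f_1,\dots,f_{k-1},-g_1,\dots,-g_{k-1})$, and it is exactly this asymmetry that pins the zeroth coordinate of every kernel vector to $0$ in the ``moreover'' conclusion.
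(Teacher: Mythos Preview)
Your proposal is correct and follows essentially the same approach as the paper: both argue by contraposition that a length-$(2k-1)$ common subsequence between two distinct codewords yields a nonzero kernel vector for some $V_{I,J}(\alpha)$ with $I,J$ agreeing on at most $k-1$ coordinates (the latter because $k$ agreements would force $f\equiv g$), and both obtain the ``moreover'' part by reading a kernel vector as a pair $(f,g)$ and using decodability to force $f=g$. Your write-up is simply more explicit about the decodability $\Leftrightarrow$ minimum-edit-distance equivalence and about the bookkeeping in the converse direction, but the mathematical content is the same.
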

\begin{proof}
Assume that the claim does not hold; therefore, there exist two distinct codewords $c\neq c'$ whose LCS is at least $2k-1$, i.e., 
 $c_I=c'_J$ for two increasing vectors $I,J\in [n]^{2k-1}$.  Assume further that $c$ and $c'$ are the encodings of the degree $k-1$ polynomials $f=\sum_if_ix^i$ and $g=\sum_ig_ix^i$, respectively.
If $I_\ell=J_\ell$ for at least $k$ coordinates, then for every such  $\ell$  
$$f(\alpha_{I_\ell})=c_{I_\ell}=c'_{J_\ell}=g(\alpha_{I_\ell})\;.$$
Hence $f\equiv g$, in contradiction to the fact that $c\neq c'$. Thus, we can assume that $I,J$ agree on at most $k-1$ coordinates. In this case, $V_{I,J}(\alpha)$ 
is singular, since the vector $(f_0-g_0,f_1,\ldots,f_{k-1},-g_1,\ldots,-g_{k-1})^t$ is in its right kernel, which contradicts our assumption. From \Cref{lem:lcs} it follows that the code can correct $n-2k+1$ insdel errors. 

\sloppy To prove the moreover part note that the argument above implies that if the code can correct  any $n-2k+1$ insdel errors and $f\neq g$ then the vector $(f_0-g_0,f_1,\ldots,f_{k-1},-g_1,\ldots,g_{k-1})$ is not in the kernel.
\end{proof}

	   In \cite{safavi2002traitor}  Safavi-Naini and Wang identified (almost) the same condition (see \Cref{rem:SNW} below) and used it  in their construction of traitor tracing schemes.	Interestingly, the later work of \cite{tonien2007construction}, which gave a construction of RS-codes capable of decoding from $\log_{k}(n+1)-1$ insdel errors, did not use this condition. In particular, as far as we know, prior to this work the condition in \Cref{prop:cond-for-RS} was not used in order to show the existence of optimal RS-codes.	
	   
	   	The following remark explains the difference between \Cref{prop:cond-for-RS} and the condition in \cite{safavi2002traitor}.
	
	\begin{remark}\label{rem:SNW}
		    The main difference between the condition presented in  \cite{safavi2002traitor} and ours, is that they considered a $2k\times 2k$ matrix and a generalized RS-code. Given evaluation points $(\alpha_1,\ldots,\alpha_n)$ and a vector with nonzero coordinates $(v_1,\ldots,v_n)\in \Fq^n$, the generalized $[n,k]_q$ RS-code is defined as the set of all  vectors $\left(v_1\cdot f(\alpha_1),\ldots,v_n\cdot f(\alpha_n)\right)$, such that $\deg(f)<k$. The  matrix studied in \cite{safavi2002traitor} is:
	    	\begin{equation} \label{eq:mat-SNW}
	V_{I,J}^{v}(\bX)=\begin{pmatrix} 
	v_{I_1} & v_{I_1}\cdot X_{I_1} & \ldots &v_{I_1}\cdot X_{I_1}^{k-1}  & v_{J_1} & v_{J_1}\cdot X_{J_1} &\ldots &v_{J_1}\cdot X_{J_1}^{k-1} \\ 
	v_{I_2} & v_{I_2}\cdot X_{I_2} & \ldots &v_{I_2}\cdot X_{I_2}^{k-1} & v_{J_2} & v_{J_2}\cdot X_{J_2} &\ldots &v_{J_2}\cdot X_{J_2}^{k-1} \\
	\vdots &\vdots & \ldots& \ldots &\vdots &\vdots &\ldots &\vdots \\
	v_{I_{2k}} &v_{I_{2k}}\cdot X_{I_{2k}} & \ldots &v_{I_{2k}}\cdot X_{I_{2k}}^{k-1} &v_{J_{2k}} &v_{J_{2k}}\cdot X_{J_{2k}} &\ldots &v_{J_{2k}}\cdot X_{J_{2k}}^{k-1} 
	\end{pmatrix} .
	\end{equation}
	In our matrix, we saved a coordinate (which leads to optimal codes) as we did not have two columns for the free terms of $f$ and $g$ (as defined in the proof). In contrast, the matrix \eqref{eq:mat-SNW} has a column for the free term of $f$ (the first) and a column for the free term of $g$ (the column $(v_{J_1},\ldots,v_{J_{2k}})$). This also leads to the requirement that $I$ and $J$ are of length $2k$ (they can still agree on at most  $k-1$ indices).
	\end{remark}

    
    \subsection{Optimal Reed-Solomon codes exist}
    \label{sec:randomized-const}
    In this section, we show that  over large enough fields, there exist   RS-codes that attain the half-Singleton bound. Specifically, we show that there exist RS-codes that can decode from a $\delta$ fraction of  insdel errors and have   rate  $\mathcal{R} = (1- \delta)/2 + o(1)$. For convenience, we repeat the statement of \Cref{thm:rs-sz}.	\rssz* 
	
	
	For a vector $I$ and an element $a$, we write $a\in I$ if $a$ appears in one of the coordinates of $I$; otherwise, we write $a\notin I.$ 
	\begin{lemma}
	\label{best-lemma}
	Let $s\geq 2$ be an integer and $I,J\in [n]^s$ two increasing vectors that do not agree on \emph{any} coordinate, i.e., $I_i\neq J_i$ for all $1\leq i\leq s$. Then, there are two distinct indices  $i\neq j\in [s]$ such that $I_i\notin J$ and $J_j\notin I$. 
	\end{lemma}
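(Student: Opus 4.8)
The plan is to work with the two index sets
\[
A=\{\, i\in[s] : I_i\notin J \,\}, \qquad B=\{\, j\in[s] : J_j\notin I \,\},
\]
and to prove the single clean fact that it is impossible to have $A=B=\{m\}$ for one common index $m$. Granting this, the lemma follows immediately: first, both $A$ and $B$ are nonempty, since if every coordinate of $I$ appeared in $J$ then $\{I_1,\dots,I_s\}\subseteq\{J_1,\dots,J_s\}$, hence these size-$s$ sets are equal, and as $I,J$ are increasing this forces $I_i=J_i$ for all $i$, contradicting the hypothesis (symmetrically $B\neq\emptyset$); second, whenever $A,B$ are nonempty and not both equal to the same singleton, one can pick $i\in A$ and $j\in B$ with $i\neq j$ --- if $|A|\ge 2$ choose $j\in B$ and then $i\in A\setminus\{j\}$, symmetrically if $|B|\ge 2$, and if $A=\{i\}$, $B=\{j\}$ then $i\neq j$ by assumption.

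So the real work is to rule out $A=B=\{m\}$. Assuming it, for every $i\neq m$ we have $I_i\in J$, and since $I$ is strictly increasing these are $s-1$ distinct elements of the entry set of $J$; so exactly one entry of $J$ is missing from $\{I_i:i\neq m\}$, and that entry must be $J_m$ (if $J_m=I_i$ for some $i\neq m$ then $J_m\in I$, contradicting $m\in B$). Hence $\{I_i:i\neq m\}=\{J_j:j\neq m\}=:S$, and the entry sets of $I$ and $J$ are $S\cup\{I_m\}$ and $S\cup\{J_m\}$ respectively, with $I_m,J_m\notin S$ and $I_m\neq J_m$. In other words, $I$ and $J$ are the increasing arrangements of two $s$-element sets that differ in exactly one element.

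It then remains to observe that two such sorted tuples must agree in at least one coordinate, which is the crux. Write $S=\{t_1<\dots<t_{s-1}\}$ and, without loss of generality, $I_m<J_m$; let $p$ and $p'$ be the positions at which $I_m$ and $J_m$, respectively, are inserted into $t_1<\dots<t_{s-1}$, so $p\le p'$. For every $\ell<p$ one has $I_\ell=t_\ell=J_\ell$, and for every $\ell>p'$ one has $I_\ell=t_{\ell-1}=J_\ell$; thus if $I_i\neq J_i$ for all $i$ we are forced to have $p=1$ and $p'=s$. But $p=1$ says $I_m$ is the minimum entry of $I$, so $I_1=I_m$ and $m=1$, while $p'=s$ says $J_m$ is the maximum entry of $J$, so $J_s=J_m$ and $m=s$; since $s\ge 2$ this is a contradiction (the case $J_m<I_m$ is symmetric, yielding $m=1$ from $J$ and $m=s$ from $I$). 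This excludes $A=B=\{m\}$ and finishes the proof.

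The step I expect to demand the most care is the middle one: passing from the combinatorial hypothesis $A=B=\{m\}$ to the statement that the entry sets of $I$ and $J$ differ in exactly one element, and then carrying out the insertion-position analysis cleanly. The rest is pigeonhole together with the final small case split.
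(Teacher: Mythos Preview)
Your proof is correct but takes a considerably longer route than the paper's. The paper dispatches the lemma in two lines: assuming without loss of generality $I_1 < J_1$, the entry $I_1$ is strictly below every entry of $J$, so $I_1 \notin J$ and one takes $i=1$; then by pigeonhole some $J_j$ is absent from the $(s-1)$-element set $\{I_2,\ldots,I_s\}$, and since every $J_j > I_1$, such a $J_j$ is absent from all of $I$. Your argument instead introduces the sets $A,B$ of exclusive positions and carries out a structural analysis to exclude the single bad configuration $A=B=\{m\}$, via a study of how two sorted $s$-tuples whose underlying sets differ in exactly one element can disagree at every coordinate. Both approaches work; the paper's exploits the total order more directly (the global minimum of the combined entry set belongs exclusively to one side), which sidesteps all the casework. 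As a minor simplification of your final paragraph: since $I$ is already the sorted arrangement of $S\cup\{I_m\}$ with $I_m$ sitting in position $m$, and likewise for $J$, you have $p=p'=m$ immediately; the insertion analysis then collapses to the observation that $I_\ell = t_\ell = J_\ell$ for $\ell < m$ and $I_\ell = t_{\ell-1} = J_\ell$ for $\ell > m$, forcing $m=1$ and $m=s$ in one stroke.
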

	\begin{proof}
	W.l.o.g. assume that $I_1<J_1$. Since $J$ is an increasing vector, $I_1\notin J$. In addition, some coordinate among $\{J_1,\ldots,J_s\}$ does not appear in $\{I_2,\ldots,I_s\}$, and any such coordinate is clearly different from $I_1$.
	\end{proof}

	
	\begin{prop} \label{prop:formal-det}
		Let $I,J\in [n]^{2k-1}$ be two  increasing vectors  that agree on at most $k-1$ coordinates.  
		Then, in the expansion of $\det(V_{I,J}(\bX))$ as a sum over permutations, there is a  monomial that is obtained at exactly one of the $(2k-1)!$ different permutations. In particular, its  coefficient is  $\pm 1$, depending on the sign of its corresponding permutation. Consequently, $\det(V_{I,J}(\bX))\neq 0$.
	\end{prop}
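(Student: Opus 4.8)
The plan is to exhibit a single permutation $\sigma_0$ of $[2k-1]$ and to prove that its term in the Leibniz expansion of $\det(V_{I,J}(\bX))$ is obtained from no other permutation. Since every entry of $V_{I,J}(\bX)$ is either $1$ or a monomial with coefficient $1$, that term equals $\operatorname{sgn}(\sigma_0)\cdot M_0$ for a monomial $M_0$, and uniqueness then forces the coefficient of $M_0$ to be $\pm1$, so $\det(V_{I,J}(\bX))\neq 0$. I will build $\sigma_0$ greedily, by a ``peeling'' procedure, recording along the way why each choice is forced.

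Set $A=\{i\in[2k-1]:I_i=J_i\}$ and $B=[2k-1]\setminus A$, so $|A|\le k-1$ and $|B|\ge k$. Given a submatrix obtained by deleting some rows and columns of $V_{I,J}(\bX)$, call a surviving variable $X_v$ \emph{$I$-forced} if $v=I_i$ for exactly one surviving row $i$ and $v\notin\{J_{i'}: i'\ \text{surviving}\}$; then $X_v$ occurs only in the surviving $I$-block columns of row $i$, so in any permutation of the submatrix the exponent of $X_v$ equals the power of the $I$-block column assigned to row $i$ (or $0$). Define \emph{$J$-forced} symmetrically. In the spirit of \Cref{best-lemma} (and directly): whenever the set $R$ of surviving rows is not contained in $A$, there exist both an $I$-forced and a $J$-forced variable --- indeed $R\not\subseteq A$ makes the equal-size sets $\{I_i:i\in R\}$ and $\{J_i:i\in R\}$ distinct, hence each has an element outside the other. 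Observe also that a forced row never lies in $A$ (for $a\in A$ the variable $X_{I_a}=X_{J_a}$ sits in both blocks of row $a$).

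Now peel: starting from $V_{I,J}(\bX)$, repeatedly select a forced variable of a prescribed type, send its unique row to the matching block column at a power not yet used in that block, and delete that row and column; schedule the peels to perform $k-1$ of the $I$-type and $k-|A|$ of the $J$-type, i.e.\ $|B|$ peels in all. All peeled rows lie in $B$; the hypothesis $|A|\le k-1$ makes the counts nonnegative, keeps the requested power available at each step, and guarantees $R\not\subseteq A$ before every peel (a row of $B$ always survives). Each peel is forced \emph{at the moment it is performed} because the peeled variable then occurs in only one row's columns; and since $I$ is increasing and $I_a=J_a$, the variables collected during peeling are disjoint from $\{X_{I_a}:a\in A\}$, so the eventual monomial factors cleanly as $M_0=(\text{peeled monomial})\cdot M_{\mathrm{final}}$.

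When peeling halts, the surviving matrix has rows $A$ and columns $\{\text{constant}\}\cup\{\text{the }|A|-1\text{ unused }J\text{-powers}\}$; using $X_{J_a}=X_{I_a}$ it is the $|A|\times|A|$ generalized Vandermonde $(X_{I_a}^{\,e})_{a\in A,\,e\in E}$, where $E=\{0\}\cup(\text{unused }J\text{-powers})$ consists of $|A|$ distinct exponents and the $X_{I_a}$ are distinct variables; its determinant is nonzero, and distinct bijections $A\to E$ produce distinct monomials, so every monomial of it comes from a unique bijection. Fix any such $M_{\mathrm{final}}$. Then $M_0$ comes from exactly one permutation of $V_{I,J}(\bX)$: unwinding the peels (each forced in the then-current submatrix, and agreeing with $\sigma_0$ by induction) pins $\sigma_0$ on $B$, after which the uniqueness in the residual Vandermonde block pins it on $A$. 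The single real obstacle is this handling of the agreement rows $A$: being block-ambiguous they can never be peeled, so the procedure must be driven to consume exactly $B$ and to exhaust the $I$-block, leaving a residual $A\times A$ block that is a genuine (non-degenerate) Vandermonde precisely because $|A|\le k-1$ leaves room for $E$ to have distinct entries. The special case $A=\emptyset$ is clean --- peel $k-1$ times on each side via \Cref{best-lemma}, leaving a $1\times1$ block --- and $k=1$ is trivial.
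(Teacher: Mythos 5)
Your argument is correct, and it reaches the conclusion by a genuinely different route than the paper. The paper proceeds by induction on $k$: using \Cref{best-lemma} (or an agreement coordinate) it finds two rows that can be forced into the two top-degree columns $X_{I_i}^{k-1}$ and $X_{J_j}^{k-1}$, deletes those rows and columns, and recurses on a matrix of the same shape with one fewer possible agreement; the pinning there rests on divisibility by the \emph{maximal} power. You instead peel one row at a time, only ever rows outside the agreement set $A$, assign each peeled row an arbitrary unused nonzero power of its block, and defer all agreement rows to a terminal $|A|\times|A|$ generalized Vandermonde block (constant column plus the unused $J$-powers), where distinct bijections give distinct monomials; uniqueness of the full monomial is then recovered by unwinding the peels in order, using that each peeled variable is confined to a single surviving row, so its exact exponent in $M_0$ pins the column. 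What the paper's route buys is a very short verification: the inductive statement is literally the proposition for $k-1$, and top-power divisibility makes the pinning immediate. Your route buys flexibility (any schedule of unused nonzero powers yields a uniquely obtained monomial, hence a whole family of them) and avoids recursing on the matrix shape, at the cost of bookkeeping that you leave somewhat compressed: the unwinding step needs that the peeled variables are pairwise distinct and disjoint from the residual variables $\{X_{I_a}\}_{a\in A}$ (so the exponent of each peeled variable in $M_0$ equals its assigned power, with no contribution from earlier, later, or residual rows), that $I$-type peels must avoid the constant column (power $0$ would not pin a row), and that the set-distinctness behind ``forced variables exist whenever $R\not\subseteq A$'' uses that $I$ and $J$ are increasing (equal sets listed in increasing order would force coordinatewise agreement on $R$). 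All of these are routine consequences of your definitions, so the proof stands.
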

	\begin{proof}
		The result  will follow by applying  induction on $k$. For $k=1$, $V_{I,J}(\bX)=1$ and the result follows. For the induction step, assume it holds for $k-1$, and we prove it for $k\geq 2$. 
		Consider  two coordinates $i,j$, determined as follows. If $I$ and $J$ agree on some coordinate, say $j$, then we set $i$ to be such that $I_i\notin J$. 
		If they do not agree on any coordinate, then we  let $i,j$ be the two coordinates guaranteed by Lemma \ref{best-lemma}.

		Next, in the determinant expansion of $V_{I,J}$ as a sum of $(2k-1)!$ monomials, collect all the monomials that are divisible by $X_{I_i}^{k-1}X_{J_j}^{k-1}$, and write them  together as
		$$X_{I_i}^{k-1}X_{J_j}^{k-1}f(\bX),$$ 
		for some polynomial $f$ in the variables $(X_\ell : \ell \in (I\setminus\{I_i\}) \cup (J\setminus\{J_j\}))$. Note that the choice of $i$ and $j$ guarantees that such monomials exist.  Observe that any monomial in  the determinant expansion of $V_{I,J}$ that is divisible by $X_{I_i}^{k-1}X_{J_j}^{k-1}$ must be obtained by picking the $(i,k)$ and the $(j,2k-1)$ entries in the matrix  \eqref{eq:mat-lcs-eq}. Hence,
		  $f$ equals  the determinant of the submatrix $V'_{I,J}$ obtained by removing   rows $i, j$  and columns $k,2k-1$ from $V_{I,J}$. Note that $V'_{I,J}$ is a matrix satisfying the conditions of the claim: it is  a  $(2k-3)\times(2k-3)$ matrix defined by two increasing vectors of length $2k-3$  that agree on at most $k-2$ coordinates. Indeed, $i$ and $j$ were chosen so that by removing them we remove one agreement, if such existed.
		  
		  Hence, by the induction hypothesis $\det(V'_{I,J})$ has a monomial $m$ (with a $\pm 1$ coefficient) that is uniquely obtained among the $(2k-3)!$ different monomials.
		  Therefore,  $X_{I_i}^{k-1}X_{J_j}^{k-1}m$ is a monomial of $X_{I_i}^{k-1}X_{J_j}^{k-1}f$ with a $\pm 1$ coefficient. Since there is no other way to obtain  this monomial in the determinant expansion of $V_{I,J}$, this monomial is uniquely obtained   in $\det(V_{I,J})$, and  the result follows. 
	\end{proof}
	We proceed to  prove \Cref{thm:rs-sz} by a standard application of the  Schwartz-Zippel lemma.

	\begin{proof}[Proof of \Cref{thm:rs-sz}]
		Define
		$$F(\bX)=\prod_{i< j}(X_i-X_j)\prod_{I,J}\det(V_{I,J}(\bX)),$$
where the second product runs over all possible pairs of increasing vectors that agree on no more than $k-1$ coordinates. Clearly, by \Cref{prop:formal-det},  $F(\bX)$ is  a nonzero polynomial in the ring 
$ \mathbb{Z}[\bX]$.  Next, we make two observations regarding the polynomial $F$. First, 
since there are $\binom{n}{2k-1}$ increasing vectors, and the degree of each   $\det(V_{I,J}(\bX))$ is at most $k(k-1)$, it follows that
\[
		\deg(F) \leq n^2 +\binom{n}{2k - 1} ^2 \cdot k(k-1) < n^{4k-2} \;.
		\]
	
    Second, as each $\det(V_{I,J}(\bX))$ is a nonzero polynomial with nonzero coefficients bounded in absolute values by $(2k-1)!$,  the absolute value of any nonzero coefficients of $F$ is at most 
    \[
    ((2k-1)!)^{\binom{n}{2k-1}^2}\leq ((2k-1)!)^{\frac{n^{4k-2}}{((2k-1)!)^2}} < e^{n^{4k-2}}. 
    \]
    We claim that there is a prime $q$ in the range $[n^{4k-2},2n^{4k-2}]$ that does not divide at least one of the nonzero coefficients of the polynomial $F$. Indeed, consider a nonzero coefficient of $F$, and assume towards a contradiction that it is divisible by all such primes. Then, by the growth rate of the primorial function, the absolute value of the coefficient is  $\Omega(e^{n^{4k-2}(1+o(1))})$, in contradiction.	
	Now, it is  easy to verify that   $F$ is also a nonzero  polynomial  in $\Fq[\bX]$,  since the monomial whose nonzero coefficient is not divisible by $q$ does not vanish. Therefore, by the Schwarz-Zippel-Demillo-Lipton lemma, there is an assignment $\alpha=(\alpha_1,\ldots,\alpha_n)$ to $\bX$ for which $F(\alpha)\neq 0 \mod q$. 
	This assignment clearly corresponds to  $n$ \emph{distinct} evaluation points, which by \Cref{prop:cond-for-RS}, define  an $[n,k]_q$ RS-code
	that can correct any $n-2k+1$ worst-case insdel errors, as claimed.
		%
	\end{proof}
	
    We remark again  that \Cref{thm:rs-sz}  merely shows the existence  of $[n,k]_q$ RS-codes that can decode from the maximum number of  insdel errors over a field of size  $q=O\left(n^{4k-2}\right)$. 
	Further, the above argument is a standard union bound over all variable assignments that make the matrix defined in \eqref{eq:mat-lcs-eq} to be singular. This by no means implies that such a large finite field is necessary. For example,  a similar union-bound argument that shows the existence of MDS codes would require an exponentially large field for codes with a constant rate. In contrast, it is well-known that MDS codes over linear field size exist (e.g., RS-codes). It would be interesting to explicitly construct codes with the same or even better parameters than the ones given in  \Cref{thm:rs-sz}.   
	Unfortunately,  we could not construct such codes, and this is left as an open question for further research.
	Nonetheless, in the next section, we provide a deterministic construction of an RS-code for any admissible $n,k$, at the expense of a larger field size than the one guaranteed by \Cref{thm:rs-sz}.

	
	\section{Deterministic construction for any $k$}
	\label{sec:det-any-k}
		In this section, we give   our main construction of an $[n,k]$ RS-code that can correct any $n-2k+1$  insdel errors. Specifically, we prove  Theorem \ref{thm:rs-explicit-const} which is restated for convenience
	\rsExpConst*
	\begin{remark}
	    The downside of this construction is the field size  $q=n^{k^{O(k)}}$, which  renders it to run in polynomial time only  for  $k = O(\log(n)/\log(\log(n)))$. For larger values of $k$, the representation of each field element requires a super polynomial number of bits.
	\end{remark}
The Mason–Stothers theorem \cite{mason,stothers1981polynomial} is a result about  polynomials that satisfy a non-trivial linear dependence, which  is analogous to the well-known \emph{abc conjecture} in number theory \cite{Masser,oesterle1988nouvelles}. Our main tool is one of the many extensions  in the literature to the Mason–Stothers theorem. For stating the theorem we need the following notation: For a polynomial $Y(x)\in \mathbb{F}[x]$ over a field with $\charac(\mathbb{F})=p\neq 0$,  denote by  $\nu (Y(x))$  the number of distinct roots of $Y(x)$ with multiplicity  not divisible by $p$. 

	\begin{thm}[``Moreover part'' of Proposition 5.2 in \cite{vaserstein2003vanishing}] \label{thm:abc-poly}
		Let $m \geq 2$ and $Y_0(x) = Y_1(x) + \ldots + Y_m(x)$ with $Y_j(x)\in \Fp [x]$. Suppose that $\gcd (Y_0(x), \ldots, Y_m(x)) = 1$, and  that $Y_1(x), \ldots, Y_m(x)$ are linearly independent over $\Fp (x^p)$.\footnote{$\Fp (x^p)$ is the field of rational functions in $x^p$. Namely, its elements are $f(x^p)/g(x^p)$ where $f(x),g(x) \in \Fp[x]$ and $g(x)\not \equiv 0$.} Then, 
		\[
		\deg (Y_0(x)) \leq -\binom{m}{2} + (m-1)\sum_{j=0}^{m} \nu (Y_j(x))\;.
		\]
	\end{thm}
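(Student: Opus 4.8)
The plan is to run the classical Wronskian proof of the Mason--Stothers (abc) theorem for polynomials, paying attention to what changes in characteristic $p$. Set
\[
W(x)\ :=\ \det\bigl(Y_j^{(i)}(x)\bigr)_{0\le i\le m-1,\ 1\le j\le m},
\]
the $m\times m$ determinant whose $j$-th column is $\bigl(Y_j,Y_j',\dots,Y_j^{(m-1)}\bigr)^{\mathsf T}$. The inequality will follow by sandwiching $\deg W$ between an upper bound coming from the determinant expansion and a divisibility lower bound coming from the repeated roots of the $Y_j$.

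First I would record three facts. \emph{(i) Non-vanishing:} applying the classical Wronskian criterion to the differential field $\Fp(x)$ with derivation $d/dx$, whose constant subfield is $\Fp(x^p)$, the linear independence of $Y_1,\dots,Y_m$ over $\Fp(x^p)$ gives $W\not\equiv 0$. \emph{(ii) Symmetry via the relation:} using $Y_0=Y_1+\dots+Y_m$, adding all the other columns to one chosen column turns it into $\bigl(Y_0,Y_0',\dots,Y_0^{(m-1)}\bigr)^{\mathsf T}$ without changing the determinant, so $W=\pm W\bigl(\{Y_j : j\in S\}\bigr)$ for \emph{every} $m$-element subset $S\subseteq\{0,1,\dots,m\}$, where $W(\cdot)$ denotes the Wronskian of the listed polynomials in any order. \emph{(iii) Degree bound:} expanding any such determinant over permutations, each term $\pm\prod_{i=0}^{m-1}Y_{\sigma(i)}^{(i)}$ has degree at most $\sum_{j\in S}\deg Y_j-(0+1+\dots+(m-1))$, so $\deg W\le \sum_{j\in S}\deg Y_j-\binom{m}{2}$ for every such $S$.

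The core step, and the one I expect to be the most delicate, is the divisibility bound: for every monic irreducible $f\in\Fp[x]$, writing $e_j:=v_f(Y_j)$ for the $f$-adic valuations,
\[
v_f(W)\ \ge\ \sum_{j=0}^{m}e_j\ -\ (m-1)\cdot\#\{\,j : p\nmid e_j\,\}.
\]
To prove it I would fix $f$ and factor powers of $f$ out of the columns of $W\bigl(\{Y_j\}_{j\in S}\bigr)$ for a well-chosen $S$. If $p\nmid e_j$ then $v_f(Y_j^{(i)})\ge e_j-i$ for all $i$, so $f^{\max(0,\,e_j-(m-1))}$ divides the $Y_j$-column. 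The feature of characteristic $p$ is that if $p\mid e_j$ then, writing $Y_j=f^{e_j}g_j$ with $f\nmid g_j$, the term $e_jf^{e_j-1}f'g_j$ of $Y_j'$ vanishes, so $Y_j'=f^{e_j}g_j'$ and inductively $Y_j^{(i)}=f^{e_j}g_j^{(i)}$ for all $i$ --- the whole $Y_j$-column is then divisible by $f^{e_j}$. Since $\gcd(Y_0,\dots,Y_m)=1$ forces $\min_j e_j=0$, choosing $S$ to omit an index with $e_j=0$ costs nothing, and adding up the per-column contributions ($e_j$ when $p\mid e_j$, at least $e_j-(m-1)$ when $p\nmid e_j$) gives the displayed bound. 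This is exactly where the refined quantity $\nu$ (roots of multiplicity coprime to $p$), rather than the full radical, is forced, and where one must use independence over $\Fp(x^p)$ rather than merely over $\Fp$.

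Finally I would combine everything. Summing the divisibility bound over all monic irreducibles and using $\sum_f v_f(Y_j)\deg f=\deg Y_j$ and $\sum_{f:\,p\nmid v_f(Y_j)}\deg f=\nu(Y_j)$ gives $\deg W\ge \sum_{j=0}^m\deg Y_j-(m-1)\sum_{j=0}^m\nu(Y_j)$. Taking $\ell$ with $\deg Y_\ell$ maximal and applying (iii) with $S=\{0,1,\dots,m\}\setminus\{\ell\}$ gives $\deg W\le \sum_{j\ne\ell}\deg Y_j-\binom{m}{2}$. Subtracting cancels $\sum_j\deg Y_j$ and leaves $\deg Y_\ell\le (m-1)\sum_{j=0}^m\nu(Y_j)-\binom{m}{2}$; since $\deg Y_0\le\deg Y_\ell$, this is the claimed inequality.
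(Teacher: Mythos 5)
This statement is not proved in the paper at all: it is imported verbatim as the ``moreover part'' of Proposition 5.2 of the cited work of Vaserstein, and the paper uses it as a black box in the analysis of Construction 4.4. So there is no in-paper argument to compare against; what you have written is a self-contained proof of the cited result, and after checking it I believe it is correct. The delicate points all hold up: linear independence over $\Fp(x^p)$ is exactly independence over the constants of the differential field $(\Fp(x),d/dx)$, so the general Wronskian criterion (valid in any differential field, unlike the naive characteristic-$p$ statement over $\Fp$) gives $W\not\equiv 0$; the column manipulation using $Y_0=Y_1+\cdots+Y_m$ legitimately identifies $W$, up to sign, with the Wronskian of any $m$-element subset of $\{Y_0,\dots,Y_m\}$; the characteristic-$p$ refinement $Y_j^{(i)}=f^{e_j}g_j^{(i)}$ when $p\mid e_j$ is what makes $\nu$ (rather than the radical) appear; the condition $\gcd(Y_0,\dots,Y_m)=1$ lets you drop, for each irreducible $f$, a column with $e_j=0$ at no cost, so the per-$f$ valuation bound covers all $m+1$ indices; and the final bookkeeping ($\deg Y_j=\sum_f v_f(Y_j)\deg f$, $\nu(Y_j)=\sum_{f:\,p\nmid v_f(Y_j)}\deg f$, and dropping the column of largest degree in the upper bound) yields exactly the stated inequality, with $\deg Y_0\le\deg Y_\ell$ closing the argument. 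This Wronskian route is the standard way such Mason--Stothers generalizations are established, so your proof is in the spirit of the cited source rather than a genuinely new approach, but it does make the paper's use of the theorem self-contained.
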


	\begin{cnst} \label{cnst:abc-rs}
		Let $k$ be a positive integer and set $\ell = ((2k)!)^2$.
		Fix a finite field $\Fp$ for a prime $p > k^2 \cdot \ell$ and let $n$ be an integer such that $2k-1< n \leq p$. Let $\Fq$ be a  field extension of $\Fp$ of  degree $k^2\cdot \ell$ and let $\gamma\in \Fq$ be such that     $\Fq = \Fp(\gamma)$. Hence, each element of $\Fq$ can be represented as a polynomial in $\gamma$, of degree less than $k^2\ell$, over $\Fp$. Define the $[n,k]_q$ RS-code by setting 
		 $\alpha_i := (\gamma-i)^{\ell}$ for  $1\leq i\leq n$.
	\end{cnst}
	%
	
	\begin{prop} \label{prop:abc-rs}
		The $[n,k]_q$ RS-code defined in \Cref{cnst:abc-rs} can correct  any $n - 2k +1$ worst case insdel errors.
	\end{prop}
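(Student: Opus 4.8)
The plan is to verify the hypothesis of \Cref{prop:cond-for-RS}: I must show that for every pair of increasing vectors $I,J\in[n]^{2k-1}$ that agree on at most $k-1$ coordinates, the matrix $V_{I,J}(\alpha)$ is nonsingular, where $\alpha_i=(\gamma-i)^{\ell}$ and $\ell=((2k)!)^2$. Suppose for contradiction that $V_{I,J}(\alpha)$ is singular. Then its kernel contains a nonzero vector, which (by the structure of the Vandermonde-type columns in \eqref{eq:mat-lcs-eq}) encodes a pair of polynomials $f,g\in\Fq[x]$ of degree $<k$, not both constant after identifying free terms, such that $f(\alpha_{I_t})=g(\alpha_{J_t})$ for all $t\in[2k-1]$. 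Writing $h(x)=f(x)-g(x)$ won't directly work because the arguments differ; instead the right object is the bivariate identity: the $2k-1$ points $(\alpha_{I_t},\alpha_{J_t})$ all lie on the curve $f(u)=g(v)$. The key substitution is $u=(\gamma-x)^{\ell}$: consider the rational/polynomial expression $f((\gamma-x)^{\ell})-g((\gamma-x)^{\ell})$ — no wait, I need two different substitutions. Rather, I set $P(x)=f((\gamma-x)^{\ell})$ and for each index pair think of the polynomial $R(x) = f((\gamma - x)^\ell) - g((\gamma - \sigma(x))^\ell)$; since $I,J$ are just index sets this is awkward. The cleaner route: define the single-variable polynomial $Q(x) = f\big((\gamma-x)^\ell\big) - g\big((\gamma-x)^\ell\big)$ is wrong too. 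Let me restate: the correct device is to build, from the common subsequence relation, a polynomial $Y_0 = Y_1+\cdots+Y_m$ with few roots but large degree, contradicting \Cref{thm:abc-poly}.

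Concretely, here is the step order I would follow. (1) From singularity of $V_{I,J}(\alpha)$, extract $f(x)=\sum_{i<k}f_ix^i$, $g(x)=\sum_{i<k}g_ix^i$ with $(f_0-g_0,f_1,\dots,f_{k-1},-g_1,\dots,-g_{k-1})$ a nonzero kernel vector, giving $f(\alpha_{I_t})=g(\alpha_{J_t})$ for $t\in[2k-1]$; by the "moreover" part of \Cref{prop:cond-for-RS} the only way to fail is that $f,g$ are genuinely nonconstant, and in fact $f\not\equiv g$ unless the kernel vector has the degenerate form. (2) Substitute $\alpha_i=(\gamma-i)^\ell$: each equation reads $f((\gamma-I_t)^\ell)=g((\gamma-J_t)^\ell)$. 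Now view $\gamma$ as transcendental — i.e., replace $\gamma$ by a variable $x$ — which is legitimate because $[\Fq:\Fp]=k^2\ell$ exceeds the degree of every polynomial identity we will derive (this is exactly why that extension degree was chosen; any polynomial of degree $<k^2\ell$ over $\Fp$ vanishing at $\gamma$ is the zero polynomial). So it suffices to derive a contradiction in $\Fp[x]$ from: $f((x-I_t)^\ell)=g((x-J_t)^\ell)$ for $2k-1$ distinct index-pairs. (3) Consider the polynomial $\Phi(x)=f((x-a)^\ell)-g((x-b)^\ell)$ for a generic pair $(a,b)=(I_t,J_t)$; its degree is $\ell(k-1)$ (leading terms $f_{k-1}(x-a)^{\ell(k-1)}$ vs $g_{k-1}(x-b)^{\ell(k-1)}$, which cancel only if $f_{k-1}=g_{k-1}$ and $a=b$, impossible since on a non-agreeing coordinate $a\neq b$, and a bookkeeping argument handles the few agreeing coordinates). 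Each common-subsequence equation says $\Phi_t(x)$ has the root... no: it says the *number* equation holds, i.e. the polynomials $f((x-I_t)^\ell)$ and $g((x-J_t)^\ell)$ are *equal as polynomials in $x$*. That is far too strong and in fact cannot hold for $2k-1$ distinct pairs simultaneously unless $f,g$ are constant. So the real contradiction is almost immediate once step (2) is set up: $f((x-I_1)^\ell)=g((x-J_1)^\ell)=f((x-I_2)^\ell)$ forces, by comparing, that $f$ is invariant under $x\mapsto x+(I_1-I_2)$ composed with the $\ell$-th power map, and iterating over all $2k-1$ shifts pins $f$ down. The abc theorem enters to control the case where cancellation among leading terms could occur; I would apply \Cref{thm:abc-poly} to a relation of the form $f((x-I_1)^\ell) - g((x-J_1)^\ell) - \big(f((x-I_2)^\ell)-g((x-J_2)^\ell)\big)+\cdots = 0$ where each summand $(x-c)^{\ell}$-type term has at most one root of multiplicity coprime to $p$ (here $p\nmid\ell$ since $p>k^2\ell$), so $\sum\nu(Y_j)$ is small — roughly $O(k)$ — while $\deg Y_0=\ell(k-1)$ is huge, and for $\ell=((2k)!)^2$ this violates the inequality $\deg Y_0\le -\binom{m}{2}+(m-1)\sum\nu(Y_j)$ with $m=O(k)$.

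(4) Finally, check the linear-independence and gcd hypotheses of \Cref{thm:abc-poly}: the $Y_j$'s are powers $(x-c)^{\ell}$ times scalars (from the $f,g$ expansions) and distinct shifts $c$ give polynomials independent over $\Fp(x^p)$ because their roots have multiplicities coprime to $p$ and distinct supports; dividing out any common factor only improves the inequality. Assemble: the forbidden large degree contradicts the abc bound, so no singular $V_{I,J}(\alpha)$ exists, and \Cref{prop:cond-for-RS} yields that the code corrects $n-2k+1$ insdel errors.

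\medskip

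The main obstacle, I expect, is step (3)–(4): correctly packaging the $2k-1$ common-subsequence equations into a *single* vanishing linear combination $Y_0=Y_1+\cdots+Y_m$ whose terms $Y_j$ genuinely have few roots of multiplicity coprime to $p$, while simultaneously ensuring the $Y_j$ are linearly independent over $\Fp(x^p)$ (so that \Cref{thm:abc-poly} applies) and that after the substitution $\gamma\to x$ no spurious collapse occurs. Getting the right value of $m$ (it should be a function of $k$ only, like $2k$ or $4k$) and matching it against $\ell=((2k)!)^2$ to make the abc inequality fail is the delicate quantitative heart of the argument; everything else is bookkeeping around the already-established \Cref{prop:cond-for-RS}.
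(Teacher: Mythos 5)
Your high-level instinct is right (reduce to nonsingularity of $V_{I,J}(\alpha)$ via \Cref{prop:cond-for-RS}, exploit the extension degree $k^2\ell$ to lift a vanishing relation to a polynomial identity over $\Fp$, and contradict \Cref{thm:abc-poly} using $\ell=((2k)!)^2$), but the central reduction in your step (2) does not work. You extract from the kernel of $V_{I,J}(\alpha)$ polynomials $f,g$ with $f(\alpha_{I_t})=g(\alpha_{J_t})$, and then want to replace $\gamma$ by an indeterminate $x$ on the grounds that the identity has degree $\ell(k-1)<k^2\ell$. But $f$ and $g$ have coefficients in $\Fq=\Fp(\gamma)$, not in $\Fp$: each coefficient is itself an $\Fp$-polynomial in $\gamma$ of degree up to $k^2\ell-1$, so the equation $f((\gamma-I_t)^\ell)=g((\gamma-J_t)^\ell)$ is a single field equation, not a polynomial identity in $\gamma$ of degree $<k^2\ell$ over $\Fp$, and it does not lift to $\Fp[x]$. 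Consequently your claim in step (3) that the common-subsequence relations force $f((x-I_t)^\ell)$ and $g((x-J_t)^\ell)$ to be \emph{equal as polynomials}, making the contradiction ``almost immediate,'' is unfounded; and your later candidate summands $Y_j$ (shifted $\ell$-th powers scaled by the coefficients of $f,g$) again live over $\Fq$, outside the scope of \Cref{thm:abc-poly} as you invoke it.

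The paper sidesteps exactly this obstacle by never touching the kernel polynomials: it applies the abc machinery to $\det(V_{I,J}(\alpha))$ itself. The Leibniz expansion writes the determinant as a sum of $(2k-1)!$ terms $P_i(\gamma)$, each a signed product of $2k-2$ factors $(\gamma-s)^{\ell j}$ with $s\in[n]\subseteq\Fp$; these genuinely lie in $\Fp[\gamma]$ and have degree $\ell k(k-1)<k^2\ell$, so the vanishing of the determinant does lift to $\Fp[\gamma]$. The unique-monomial statement (\Cref{prop:formal-det}) guarantees a term $P_0$ with coefficient $\pm1$ that no other term is proportional to; one then takes a minimal subset $P_1,\dots,P_m$ spanning $P_0$, divides by the gcd, proves linear independence over $\Fp(\gamma^p)$ (the paper's \Cref{clm:poly-ind}, via reduction mod $\gamma^p$), and applies \Cref{thm:abc-poly} with $\nu(\overline{P_j})\le 2k-2$ and $m\le(2k-1)!$ to get $\ell\le m^2(2k-2)\le((2k-1)!)^2(2k-2)$, contradicting $\ell=((2k)!)^2$. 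This also explains the quantitative choice of $\ell$ that you flagged as the ``delicate heart'': $m$ is bounded by the number of Leibniz terms, not by $O(k)$. To repair your write-up you would need to replace steps (2)--(4) by an argument of this kind, since the packaging you describe cannot be carried out with $\Fq$-coefficient data.
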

	\begin{proof}
	Let  $I,J\in [n]^{2k-1}$ be two increasing vectors  that agree on at most $k-1$ coordinates. By Proposition \ref{prop:cond-for-RS} it is enough to show that $V_{I,J}(\alpha)$ is non-singular, for every such $I,J$.  By the Leibniz formula, $\det(V_{I,J}(\alpha))$ is a sum  of $(2k-1)!$ terms corresponding to the different permutations. Denote these terms as $P_i(\gamma)$ for $i=0,\ldots,(2k-1)!-1$. Each of the terms is a product of the sign of the corresponding permutation with some $2k-1$ elements of the form   $(\gamma - s)^{\ell \cdot j}$, for some $s\in I\cup J$ and $0\leq j \leq k-1$. 
	Assume towards a contradiction that $\det(V_{I,J}(\alpha)) = 0$ in $\Fq$, i.e., 
	\begin{equation}
	\label{good-eq2}
		\det(V_{I,J}(\alpha)) = P_0(\gamma) + \ldots + P_{(2k-1)!-1}(\gamma) = 0 \;,
	\end{equation}
	in $\Fq$. By viewing every term in \eqref{good-eq2} as a univariate polynomial in $\gamma$ over $\Fp$, one can verify that, for any $ j$,  $\deg (P_j)$ = $\ell \cdot k(k-1)<k^2\ell$. As $\Fq$ is an extension of $\Fp$ of degree $k^2\ell$, it follows that \eqref{good-eq2} holds also in $\Fp[\gamma]$, the ring of polynomials in the variable $\gamma$ over $\Fp$.
	By Proposition \ref{prop:formal-det} the determinant of the variable matrix \eqref{eq:mat-lcs-eq}  has a  monomial  that is uniquely obtained and therefore has  a $\pm 1$  coefficient. Assume, without loss of generality, that 
	$P_0$ is the  image of this monomial under   the mapping defined by the assignment $X_i\mapsto (\gamma-i)^\ell$. Note  that since this mapping is  injective on the  set of monomials, no other monomial is mapped to  a scalar multiple of  $P_0$. 
	In other words,  $P_0$ and $P_i$ are linearly independent for any $i\geq 1$. Assume further that (without loss of generality) $P_1,\ldots, P_m$ is a minimal subset among $\{P_i\}_{i\geq 1}$ that spans $P_0$ over $\Fp$. The existence of such a set follows from \eqref{good-eq2}. 	Hence, we can write    
\begin{equation}
\label{last-eq}
    P_0=\sum_{i=1}^ma_iP_i,\text{ where } a_i\in \mathbb{F}_p \backslash \{0\}.
\end{equation}
	Clearly, by  minimality, $P_1,\ldots,P_m$ are linearly independent over $\Fp$. Further, $m\geq 2$, since otherwise there would be an $i>0$ such that $P_i$  is a multiple of $P_0$.
	
	Since  the $P_i$'s are of degree $\ell k(k-1)$, and $P_0$ was obtained from a  unique monomial in the determinant expansion, it follows that  the greatest common divisor  
	$Q:=\gcd(P_0,\ldots,P_m)$
	has degree at most $\ell(k(k-1)-1)$. By dividing \eqref{last-eq} by $Q$ we have 
	\begin{equation}
    \label{last-eq2}
	    \overline{P_0}=\sum_{i=1}^ma_i\overline{P_i},
	\end{equation}
	where $\overline{P_i}=P_i/Q$. 
	We will need the following claim, whose proof is deferred to the end of this section.
		\begin{claim} \label{clm:poly-ind}
			The polynomials $\overline{P_1}, \ldots, \overline{P_m}$ are linearly independent over $\Fp(\gamma^p)$.
		\end{claim}
	
	The contradiction will follow by invoking  Theorem \ref{thm:abc-poly}. Towards this end note that (i) By Claim \ref{clm:poly-ind} the polynomials  $\overline{P_1},\ldots,\overline{P_m}$ are linearly independent of $\mathbb{F}_p(\gamma^p)$ (ii) $\gcd(\overline{P_0},\ldots,\overline{P_m})=1$, and  (iii)     $\nu (\overline{P_j}) \leq  2k-2$, as $P_j$ is the multiplication of $2k-2$  non-constant polynomials, each having  a single root.
		Thus, by  \eqref{last-eq2} and \Cref{thm:abc-poly}
		\begin{align*}
		\ell\leq \deg(P_0)-\deg(Q)= \deg(\overline{P_0}) 
		&\leq -\binom{m}{2} + \left(m - 1 \right) \cdot \sum_{i=1}^m \nu(\overline{P_j})\\
		&< m^2(2k-2)\\
		&\leq ((2k-1)!)^2\cdot (2k-2)\;,
		\end{align*}
		which is a contradiction by the choice of $\ell$. This completes the proof. 
	\end{proof}
	It remains to prove Claim \ref{clm:poly-ind}.

	\begin{proof}[Proof of \Cref{clm:poly-ind}]
		 Assume towards a contradiction that there exist $\lambda_1,\ldots, \lambda_m \in \Fp(\gamma^p)$ not all zero,  such that 
		\begin{equation} \label{eq:abc-ind-poly}
		\sum_{j=1}^{m} \lambda_j(\gamma^p) \overline{P_j}(\gamma) = 0\;.
		\end{equation}
		By clearing the denominators of the $\lambda_j$'s and any  common factor they might have, we can assume that the $\lambda_j$'s are  polynomials in the variable $\gamma^p$ with no common factors. Since $\deg(\overline{P_j})\leq \deg(P_j)<p$, we get by reducing  \eqref{eq:abc-ind-poly} modulo $\gamma^p$ that 
		$$\sum_{j=1}^{m} \lambda_j(0) \overline{P_j}(\gamma) = 0\;.$$
		Note that  $\lambda_j(0)\neq 0$ for some $j$, since otherwise $\gamma^p$ would be a common factor of the $\lambda_i$'s. Hence,   $\overline{P_1},\ldots,\overline{P_m}$ are linearly dependent over $\Fp$, which contradicts the fact that  $P_1,\ldots,P_m$  are linearly independent over $\mathbb{F}_p$.  
	\end{proof}
	By setting  $n = p$ in \Cref{cnst:abc-rs} it follows that  the field size of  Construction \ref{cnst:abc-rs} is roughly $n^{k^{O(k)}}$ which is much worse than the field size guaranteed by the existential result in Theorem \ref{thm:rs-sz}. Note, however, that the construction runs in polynomial time  for  RS-codes with dimension $O(\log(n)/\log(\log(n)))$. The proof of \Cref{thm:rs-explicit-const} immediately follows.

	\section{Explicit construction for $k=2$ with quartic field size}
    \label{sec:det-k-2}

	In this section we prove  \Cref{thm:rs-twodim-const}, which is restated for convenience.
	\rsTwoDimConst*
	The proof of \Cref{thm:rs-twodim-const} requires  the notion of  
	Sidon spaces, which     were  introduced in a work of Bachoc, Serra and Z{\'e}mor
	\cite{bachoc2017analogue} in the study  of an analogue of Vosper's theorem for finite fields. 
	Later, Roth, Raviv and Tamo  gave an explicit construction of  Sidon spaces and used it to provide a construction of  cyclic subspace codes \cite{roth2017construction}. Our construction  relies on the construction of Sidon spaces of \cite{roth2017construction}, which was also recently used in \cite{DBLP:conf/pkc/RavivLT21} to construct a public-key cryptosystem. We believe that  Sidon spaces in general, and  specifically the construction of \cite{roth2017construction}, might find  more applications in coding theory and cryptography in the future.  We begin with a formal definition of a Sidon space. 
	\begin{defi}
		An $\Fq$ linear  subspace $S\subseteq \mathbb{F}_{q^n}$  is called a \emph{Sidon space} if for any nonzero elements $a,b,c,d \in S$ such that   $ab=cd$, it holds that that $$\{a \Fq, b \Fq\} = \{c \Fq, d \Fq\},$$ where $x \Fq=\{x  \cdot \alpha: \alpha\in \Fq\}$ . 
	\end{defi}
	A  Sidon space $S$ has the following interesting property, from which it draws its name: Given the product  $a\cdot b$ of   two nonzero elements $a,b\in S $, one can uniquely factor it to its two factors $a,b$ from $S$,  up to a multiplication by a scalar from the base field. Clearly, this is the best one can hope for, since for any nonzero $\alpha\in \Fq$ the product of the  elements $\alpha\cdot a, b/\alpha\in S$ also equals $a\cdot b$. A Sidon space can be viewed as a multiplicative analogue to the well-known notion of \emph{Sidon sets}, which is a common object of study in combinatorics, see e.g.   \cite{erdos1941problem}. 
	
	We proceed to present the  construction of a Sidon space  given in  \cite{roth2017construction}.
	\begin{thm}[Construction 15, Theorem 16 in \cite{roth2017construction}] \label{thm:Sidon-tamo}  \label{thm:sidon-space}
		Let $q\geq 3$ be a prime power, $m\in \mathbb{N}$, and   $n = 2m$. Then, there exists an explicit $\gamma \in \mathbb{F}_{q^n}$ such that  $S = \{u + u^q \cdot \gamma \mid u\in \mathbb{F}_{q^m} \}$ is  an $m$-dimensional Sidon space over $\Fq$.
	\end{thm}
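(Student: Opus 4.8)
The plan is to turn the multiplicative identity $ab=cd$ (for nonzero $a,b,c,d\in S$) into two polynomial identities over $\mathbb{F}_{q^m}$ by exploiting that $\gamma$ has degree $2$ over $\mathbb{F}_{q^m}$, and then to choose $\gamma$ so that the first identity forces a degeneracy. First I would record two easy facts about the map $\phi\colon\mathbb{F}_{q^m}\to\mathbb{F}_{q^n}$, $\phi(u)=u+u^{q}\gamma$: it is $\Fq$-linear (since $u\mapsto u^{q}$ is the $\Fq$-linear Frobenius on $\mathbb{F}_{q^m}$), and it is injective whenever $\gamma\notin\mathbb{F}_{q^m}$ (if $u+u^{q}\gamma=0$ with $u\neq0$ then $\gamma=-u^{1-q}\in\mathbb{F}_{q^m}$). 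Hence $\dim_{\Fq}S=m$, and the nonzero elements of $S$ are precisely the $\phi(u)$ with $u\in\mathbb{F}_{q^m}^{*}$. I would take $\gamma\in\mathbb{F}_{q^n}\setminus\mathbb{F}_{q^m}$; since $[\mathbb{F}_{q^n}:\mathbb{F}_{q^m}]=2$, the set $\{1,\gamma\}$ is an $\mathbb{F}_{q^m}$-basis of $\mathbb{F}_{q^n}$ and $\gamma^{2}=\tau\gamma+\sigma$ for unique $\tau,\sigma\in\mathbb{F}_{q^m}$, with $-\sigma=\gamma^{1+q^{m}}=N_{\mathbb{F}_{q^n}/\mathbb{F}_{q^m}}(\gamma)$.

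Writing $a=\phi(u),\,b=\phi(v),\,c=\phi(x),\,d=\phi(y)$ with $u,v,x,y\in\mathbb{F}_{q^m}^{*}$, expanding $ab$ and $cd$ using $\gamma^{2}=\tau\gamma+\sigma$ and comparing coordinates in the basis $\{1,\gamma\}$ yields
\begin{align*}
(uv-xy)+\sigma\,(u^{q}v^{q}-x^{q}y^{q})&=0,\\
(uv^{q}-xy^{q})+(u^{q}v-x^{q}y)+\tau\,(u^{q}v^{q}-x^{q}y^{q})&=0.
\end{align*}
With $w:=uv-xy$ we have $u^{q}v^{q}-x^{q}y^{q}=w^{q}$, so the first identity reads $w+\sigma w^{q}=0$; thus either $w=0$, or $w^{q-1}=-\sigma^{-1}=N(\gamma)^{-1}$. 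The heart of the argument is to pick $\gamma$ so that $N_{\mathbb{F}_{q^n}/\mathbb{F}_{q^m}}(\gamma)$ is \emph{not} a $(q-1)$-th power in $\mathbb{F}_{q^m}^{*}$; then neither is $N(\gamma)^{-1}$, the equation $w^{q-1}=N(\gamma)^{-1}$ is unsolvable, and we are forced into $w=0$, i.e.\ $uv=xy$ and $u^{q}v^{q}=x^{q}y^{q}$. The second identity then reduces to $uv^{q}+u^{q}v=xy^{q}+x^{q}y$.

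For the choice of $\gamma$ I would take a \emph{primitive element} of $\mathbb{F}_{q^n}$, an explicit and standard object. Its multiplicative order is $q^{n}-1>q^{m}-1$, so $\gamma\notin\mathbb{F}_{q^m}$; and $N(\gamma)=\gamma^{1+q^{m}}=\gamma^{(q^{n}-1)/(q^{m}-1)}$ is a $(q-1)$-th power in $\mathbb{F}_{q^m}^{*}$ if and only if $N(\gamma)^{(q^{m}-1)/(q-1)}=\gamma^{(q^{n}-1)/(q-1)}$ equals $1$, which (as $\gamma$ has order $q^{n}-1$) fails precisely because $q-1>1$. This is exactly where the hypothesis $q\geq3$ enters, and I expect that identifying a clean choice of $\gamma$ and checking this norm computation is the main point of the proof; the weaker statement that \emph{some} admissible $\gamma$ exists also follows from a crude count of the $\gamma$ lying in $\mathbb{F}_{q^m}$ or with $N(\gamma)\in(\mathbb{F}_{q^m}^{*})^{q-1}$.

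Finally I would deduce the Sidon conclusion from $uv=xy$ and $uv^{q}+u^{q}v=xy^{q}+x^{q}y$. Put $\rho:=x/u\in\mathbb{F}_{q^m}^{*}$, so $x=\rho u$ and $uv=xy$ forces $y=\rho^{-1}v$. Substituting into the second relation and using $1-\rho^{1-q}=(\rho^{q-1}-1)\rho^{-(q-1)}$, it factors as $(\rho^{q-1}-1)\bigl(\rho^{-(q-1)}uv^{q}-u^{q}v\bigr)=0$. In the first case $\rho^{q-1}=1$, i.e.\ $\rho\in\Fq^{*}$, and $\Fq$-linearity of $\phi$ gives $c=\phi(\rho u)=\rho a$ and $d=\phi(\rho^{-1}v)=\rho^{-1}b$. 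In the second case $\rho^{q-1}=(v/u)^{q-1}$, so $\rho u=\lambda v$ with $\lambda=\rho u/v\in\Fq^{*}$; then $x=\lambda v$ and $y=\lambda^{-1}u$, hence $c=\phi(\lambda v)=\lambda b$ and $d=\phi(\lambda^{-1}u)=\lambda^{-1}a$. Either way $\{a\Fq,b\Fq\}=\{c\Fq,d\Fq\}$, which is the Sidon property; the only care needed here is bookkeeping in the factorization and the (automatic) exclusion of $u,v,x,y=0$ via injectivity of $\phi$.
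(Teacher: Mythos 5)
This statement is not proved in the paper at all: it is imported verbatim from Roth--Raviv--Tamo \cite{roth2017construction} (their Construction 15 and Theorem 16), so there is no in-paper proof to compare against. Your argument is correct --- I checked the expansion of $ab=cd$ in the $\mathbb{F}_{q^m}$-basis $\{1,\gamma\}$, the reduction of the first coordinate to $w+\sigma w^q=0$ with $-\sigma=N_{\mathbb{F}_{q^n}/\mathbb{F}_{q^m}}(\gamma)$, and the final factorization $(\rho^{q-1}-1)\bigl(\rho^{-(q-1)}uv^q-u^qv\bigr)=0$ --- and it is essentially the same proof as in the cited source: there, $\gamma$ is taken to be a root of an irreducible quadratic over $\mathbb{F}_{q^m}$ whose constant term (equivalently, the norm of $\gamma$) is not a $(q-1)$-st power, which is exactly the condition your computation isolates, and the rest of the argument proceeds as yours does. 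The one genuine difference is your choice of $\gamma$ as a primitive element of $\mathbb{F}_{q^n}$: this does satisfy the norm condition when $q\geq 3$ (your order computation is right), but it is a stronger object than needed and arguably weakens the claim of explicitness, since no general deterministic polynomial-time construction of primitive elements is known; the cited construction only needs the weaker, easily certifiable condition that $\gamma\notin\mathbb{F}_{q^m}$ and $N_{\mathbb{F}_{q^n}/\mathbb{F}_{q^m}}(\gamma)$ is not a $(q-1)$-st power, which is also what your own reduction shows suffices.
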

	Another component in our construction  is the ``long'' ternary code with   minimum distance of at least $5$, given in \cite{gashkov1986linear}. We note that we could also use the codes given in \cite{danev2008family}.
	
	\begin{thm} \cite{gashkov1986linear} \label{thm:bch}
		For every $m \geq 1$, there exits an explicit $[(3^m+1)/2, (3^m+1)/2 - 2m]_3$ linear code with minimum distance  at least $5$.
	\end{thm}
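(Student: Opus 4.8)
The plan is to exhibit this code as the ternary ``Zetterberg‑type'' code built from the unit circle of $\F_{3^{2m}}$, and to prove the distance bound via a symmetric‑function argument that exploits the fact that, on the unit circle, the Frobenius $x\mapsto x^{3^m}$ acts as inversion.

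\emph{Construction.} Since $3^m+1\mid 3^{2m}-1$, the cyclic group $\F_{3^{2m}}^{*}$ has a unique subgroup $U$ of order $3^m+1$ --- equivalently, $U$ is the set of elements of norm one over $\F_{3^m}$. As $3^m+1$ is even and $-1\neq 1$ in characteristic $3$, the set $U$ is a disjoint union of $n:=(3^m+1)/2$ pairs $\{u,-u\}$; fix one representative $u_i$ from each pair. After fixing an $\F_3$‑basis of $\F_{3^{2m}}$, the single $\F_{3^{2m}}$‑linear equation $\sum_{i=1}^n c_i u_i=0$ becomes a system of $2m$ linear equations over $\F_3$; let $\cC\subseteq \F_3^n$ be its solution space. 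A generator of $U$ (hence the $u_i$) and the basis can be found in time $\poly(3^m)$, so $\cC$ is explicit.

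\emph{Dimension.} It suffices to show the $2m$ checks are independent, i.e.\ that $c\mapsto\sum c_i u_i$ is onto $\F_{3^{2m}}$, i.e.\ that $\mathrm{span}_{\F_3}\{u_1,\dots,u_n\}=\F_{3^{2m}}$. Since $\{u_i\}\cup\{-u_i\}=U$ and $-1\in\F_3$, this span equals $\mathrm{span}_{\F_3}(U)$; because $U$ is closed under multiplication and contains $1$, this span is an $\F_3$‑subalgebra of $\F_{3^{2m}}$, hence a subfield $\F_{3^d}$ with $d\mid 2m$, and it contains an element of multiplicative order exactly $3^m+1$, so $3^m+1\mid 3^d-1$. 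Every proper divisor of $2m$ is at most $m$, and $3^m-1<3^m+1$, so necessarily $d=2m$; thus $\dim_{\F_3}\cC=n-2m$.

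\emph{Minimum distance.} Let $c\in\cC\setminus\{0\}$ have weight $w\le 4$ with support $\{i_1,\dots,i_w\}$, and set $v_t:=c_{i_t}u_{i_t}\in U$ (note $c_{i_t}\in\{1,-1\}\subseteq U$). Since the $u_{i_t}$ lie in distinct $\pm$‑pairs and $c_{i_t}\in\{\pm 1\}$, the $v_t$ satisfy $v_s\neq\pm v_t$ for $s\neq t$, and $\sum_t v_t=0$. The crucial extra relation comes from applying the Frobenius $x\mapsto x^{3^m}$, which inverts every element of $U$: this yields $\sum_t v_t^{-1}=0$, i.e.\ (after clearing denominators) the vanishing of the $(w-1)$‑st elementary symmetric function of $v_1,\dots,v_w$. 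Writing $e_j$ for the $j$‑th elementary symmetric function of the $v_t$: for $w=1,2$ the relation $\sum_t v_t=0$ is already impossible ($v_1=0$, resp.\ $v_1=-v_2$). For $w=3$ we get $e_1=e_2=0$, so the $v_t$ are the roots of $X^3-e_3$; but $x\mapsto x^3$ is a bijection of $\F_{3^{2m}}$, forcing $v_1=v_2=v_3$ --- contradiction. For $w=4$ we get $e_1=e_3=0$, so the $v_t$ are roots of $X^4+e_2X^2+e_4$, a polynomial in $X^2$; hence the four squares $v_t^2$ take at most two values, so $v_s^2=v_t^2$, i.e.\ $v_s=\pm v_t$, for some $s\neq t$ --- contradiction. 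Therefore $\cC$ has minimum distance at least $5$.

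\emph{Main obstacle.} The delicate part is the weight‑$3$ and weight‑$4$ analysis: one must (a) keep the bookkeeping clean so that, after absorbing the $\pm 1$ coefficients, the nonzero coordinates genuinely give \emph{honestly distinct} elements of $U$ that are moreover pairwise non‑negatives, and (b) use the ``inversion $=$ Frobenius'' identity to annihilate precisely the right elementary symmetric functions, so that the characteristic polynomial of $(v_t)$ becomes too sparse to admit distinct roots. By comparison, the dimension count (span is a subfield containing an element of order $3^m+1$) and the explicitness claim are routine.
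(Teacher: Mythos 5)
The paper offers no proof of this statement --- it is imported verbatim from \cite{gashkov1986linear} --- so there is nothing internal to compare against; your argument is a correct, self-contained reconstruction of essentially that known construction (the ternary Zetterberg-type code on the norm-one circle of $\F_{3^{2m}}$). All the key steps check out: $-1\in U$ because $|U|=3^m+1$ is even, so the $\pm$-pairing and the claim $v_s\neq\pm v_t$ are sound; the span of $U$ is a finite integral domain, hence a subfield $\F_{3^d}$ with $3^m+1\mid 3^d-1$, forcing $d=2m$ and dimension $n-2m$; and the distance bound correctly combines $e_1=0$ with $e_{w-1}=0$ (from $x^{3^m}=x^{-1}$ on $U$) to rule out weights $3$ (triple root of $X^3-e_3$ in characteristic $3$) and $4$ (roots of a quartic in $X^2$ forcing $v_s=\pm v_t$).
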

	We next combine the above two algebraic objects and construct an RS-code with the desired properties. 
	\begin{cnst}
	\label{RS k=2}
	For $q = 3$ and  $m\in \mathbb{N}$. Let 
	$S \subset \mathbb{F}_{3^{4m}}$ be a  $2m$-dimensional Sidon space over $\mathbb{F}_3$ as guaranteed by \Cref{thm:sidon-space}. Let $s_1, \ldots, s_{2m}$ be a basis of $S$. 
	Let $H = (h_{i,j})$ be a  ${(2m) \times ((3^m +1)/2)}$ parity check matrix of the code given  in \Cref{thm:bch}.
	Our  $[n,2]_{3^{4m}}$ RS-code of length  $n=(3^m +1)/2$ is defined by the evaluation points 
	\[
	\alpha_j = \sum_{i=1}^{2m}s_i h_{i,j} \text{ for }  1\leq j \leq (3^m+1)/2  \;.
	\]
	In other words, we can think of our evaluation points as the $n$ coordinates of the vector $\alpha=(s_1,\ldots,s_{2m})\cdot H$.
	\end{cnst}
	
The following property of the evaluation points $\alpha_j$ follows easily from \Cref{thm:bch}. 	
	\begin{lemma}
		\label{158}
		Any four distinct $\alpha_j$'s  are linearly independent over $\mathbb{F}_3$.
	\end{lemma}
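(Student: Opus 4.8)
The plan is to unpack the definition of the evaluation points $\alpha_j = \sum_{i=1}^{2m} s_i h_{i,j}$ and reduce a linear dependence among four of them to a low-weight codeword in the code of \Cref{thm:bch}. Concretely, suppose $\alpha_{j_1},\alpha_{j_2},\alpha_{j_3},\alpha_{j_4}$ are four distinct evaluation points and there exist scalars $c_1,c_2,c_3,c_4\in\mathbb{F}_3$, not all zero, with $\sum_{t=1}^4 c_t\alpha_{j_t}=0$. Substituting the definition, $0=\sum_{t=1}^4 c_t\sum_{i=1}^{2m} s_i h_{i,j_t}=\sum_{i=1}^{2m} s_i\left(\sum_{t=1}^4 c_t h_{i,j_t}\right)$. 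Since $s_1,\ldots,s_{2m}$ is a basis of $S$ over $\mathbb{F}_3$, they are in particular $\mathbb{F}_3$-linearly independent, so every coefficient vanishes: $\sum_{t=1}^4 c_t h_{i,j_t}=0$ for all $1\le i\le 2m$.

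Next I would read this as a statement about the columns of the parity-check matrix $H$. Let $h^{(j)}$ denote the $j$-th column of $H$. Then the previous identity says $\sum_{t=1}^4 c_t\, h^{(j_t)} = 0$ in $\mathbb{F}_3^{2m}$, i.e.\ the vector $x\in\mathbb{F}_3^{n}$ supported on the coordinates $j_1,j_2,j_3,j_4$ with values $c_1,c_2,c_3,c_4$ satisfies $Hx=0$, and hence $x$ is a codeword of the code from \Cref{thm:bch}. Because the $j_t$ are distinct and the $c_t$ are not all zero, $x$ is a nonzero codeword of Hamming weight at most $4$. But \Cref{thm:bch} guarantees minimum distance at least $5$, a contradiction. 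Therefore no such nontrivial dependence exists, and the four $\alpha_j$'s are linearly independent over $\mathbb{F}_3$.

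There is essentially no hard step here; the only things to be careful about are (i) that the $s_i$ being a \emph{basis} (not merely a spanning set) is exactly what lets us pass from "$\mathbb{F}_3$-combination of the $s_i$ is zero" to "each coefficient is zero", and (ii) that we need $n=(3^m+1)/2\ge 4$ for four distinct indices to exist, which holds for all $m\ge 1$ (indeed $n\ge 2$ already, and the statement of \Cref{thm:rs-twodim-const} assumes $n\ge 4$). If anything counts as the main subtlety, it is just making sure the bookkeeping between "four evaluation points", "four columns of $H$", and "a weight-$\le 4$ codeword" is stated cleanly; the distance-$5$ property of the code does all the real work.
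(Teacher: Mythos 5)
Your proof is correct and follows essentially the same route as the paper: expand the $\alpha_j$'s, use the $\mathbb{F}_3$-linear independence of the basis $s_1,\ldots,s_{2m}$ to force each row-wise coefficient to vanish, and then read the resulting dependence of four columns of $H$ as a nonzero codeword of weight at most $4$, contradicting the minimum distance $\geq 5$ from \Cref{thm:bch}. No gaps to report.
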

	
	\begin{proof}
		Consider four distinct $\alpha_j$'s, say  $\alpha_1, \alpha_2, \alpha_3, \alpha_4$, and  assume towards a contradiction that there exist $\beta_1,\ldots,\beta_4\in \mathbb{F}_3$ not all zero,   such that 
		$\sum_{i=1}^4\beta_i\alpha_i=0$. Then 
$$
		0=\sum_{i=1}^4\beta_i\alpha_i
		=\sum_{i=1}^4\beta_i\sum_{j=1}^{2m}s_j h_{j,i}
		=\sum_{j=1}^{2m}s_j\sum_{i=1}^{4}\beta_i h_{j,i} \;.
		$$
		Since the $s_j$'s are linearly independent over $\mathbb{F}_3$ it follows that $\sum_i\beta_ih_{j,i}=0$ for every $j=1,\ldots,2m$. Hence, the four columns  $h_1,h_2, h_3, h_4$ of $H$ are  linearly dependent  over $\mathbb{F}_3$, which contradicts the fact that the minimum distance of the code checked by $H$ is at least $5$.
	\end{proof}
We proceed to prove that the constructed RS-code can decode from  the maximum number of  insdel errors. 
	\begin{thm} \label{prop:k-2-explicit}
		The $\left[n,2\right]_{3^{4m}}$ RS-code given in Construction \ref{RS k=2}  can correct any $n-3$ worst case  insdel errors. 
	\end{thm}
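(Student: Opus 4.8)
The plan is to verify the algebraic condition of \Cref{prop:cond-for-RS} for the case $k=2$. For $k=2$ the matrix $V_{I,J}(\alpha)$ is the $3\times 3$ matrix whose $\ell$-th row is $(1, \alpha_{I_\ell}, \alpha_{J_\ell})$, and by \Cref{prop:cond-for-RS} it suffices to show that this matrix is non-singular for every pair of increasing vectors $I,J\in[n]^3$ that agree on at most one coordinate. So first I would unpack what the determinant being zero means: expanding $\det(V_{I,J}(\alpha))=0$ and using row operations (subtracting rows to kill the leading $1$'s), singularity is equivalent to the statement that the three points $(\alpha_{I_\ell},\alpha_{J_\ell})$, $\ell=1,2,3$, are collinear in $\mathbb{F}_{q^2}$-affine space, i.e.
\[
\frac{\alpha_{J_1}-\alpha_{J_2}}{\alpha_{I_1}-\alpha_{I_2}}=\frac{\alpha_{J_2}-\alpha_{J_3}}{\alpha_{I_2}-\alpha_{I_3}},
\]
which rearranges to $(\alpha_{J_1}-\alpha_{J_2})(\alpha_{I_2}-\alpha_{I_3})=(\alpha_{J_2}-\alpha_{J_3})(\alpha_{I_1}-\alpha_{I_2})$. (One must also handle the degenerate cases where some $\alpha_{I_a}=\alpha_{I_b}$ or $\alpha_{J_a}=\alpha_{J_b}$, but since the $\alpha_j$ are distinct these only occur when indices coincide, and the ``agree on at most one coordinate'' hypothesis together with $I,J$ increasing keeps the relevant differences nonzero — this needs a short case check.)

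Next I would exploit the Sidon-space structure. Each $\alpha_j$ lies in the $2m$-dimensional Sidon space $S\subseteq\mathbb{F}_{3^{4m}}$, and crucially, by \Cref{158}, any four distinct evaluation points are $\mathbb{F}_3$-linearly independent, hence so is any subset of them. The key move is to rewrite the collinearity relation as a single equation of the form $ab=cd$ with $a,b,c,d\in S$: setting $a=\alpha_{J_1}-\alpha_{J_2}$, $b=\alpha_{I_2}-\alpha_{I_3}$, $c=\alpha_{J_2}-\alpha_{J_3}$, $d=\alpha_{I_1}-\alpha_{I_2}$ (all of which lie in $S$ because $S$ is $\mathbb{F}_3$-linear), the singularity of $V_{I,J}(\alpha)$ is exactly $ab=cd$. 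By the defining property of a Sidon space this forces $\{a\mathbb{F}_3,b\mathbb{F}_3\}=\{c\mathbb{F}_3,d\mathbb{F}_3\}$, so either ($a=\pm b^{-1}\cdot$ nonsense — no): more precisely $a\in c\mathbb{F}_3$ and $b\in d\mathbb{F}_3$, or $a\in d\mathbb{F}_3$ and $b\in c\mathbb{F}_3$. Each of these says that two differences of $\alpha$'s are $\mathbb{F}_3$-scalar multiples of one another, i.e. there is a nontrivial $\mathbb{F}_3$-linear dependence among at most four of the $\alpha_j$'s.

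The final step is to derive a contradiction from such a dependence using \Cref{158}. Here I would carefully enumerate the index multiset $\{I_1,I_2,I_3,J_1,J_2,J_3\}$: since $I$ and $J$ are each strictly increasing and agree on at most one coordinate, the six indices comprise at least five distinct values, and the differences $\alpha_{J_1}-\alpha_{J_2},\alpha_{J_2}-\alpha_{J_3},\alpha_{I_1}-\alpha_{I_2},\alpha_{I_2}-\alpha_{I_3}$ involve enough distinct indices that any relation of the form ``one of these differences is an $\mathbb{F}_3$-multiple of another'' expands to a nontrivial $\mathbb{F}_3$-combination of four distinct $\alpha_j$'s summing to zero (one has to check that the coefficients do not cancel, which is where the increasing hypothesis and the $\mathbb{F}_3$ coefficient set $\{0,\pm1\}$ are used), contradicting \Cref{158}. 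I expect the main obstacle to be exactly this bookkeeping: there are several sub-cases depending on how the single possible agreement between $I$ and $J$ is placed among the three rows, and in each sub-case one must confirm that the resulting linear dependence genuinely involves four \emph{distinct} indices with nonzero coefficients so that \Cref{158} applies — the Sidon property does the real work, but ruling out ``accidental'' degeneracies is the fiddly part. Once all sub-cases are dispatched, \Cref{prop:cond-for-RS} gives that the code corrects $n-3$ insdel errors.
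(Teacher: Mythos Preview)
Your proposal is correct and follows essentially the same route as the paper: reduce via \Cref{prop:cond-for-RS} to the $3\times 3$ determinant, rewrite its vanishing as a product identity $ab=cd$ among differences of evaluation points lying in the Sidon space $S$, apply the Sidon property to force one difference to be an $\mathbb{F}_3$-multiple of another, and finish with \Cref{158}. One small correction: the resulting $\mathbb{F}_3$-linear relation need not involve four distinct evaluation points---it involves \emph{at most} four and \emph{at least} two or three (depending on which coordinate, if any, $I$ and $J$ share and on the value of the scalar $\lambda\in\{\pm1\}$), and \Cref{158} already implies that any set of at most four distinct $\alpha_j$'s is $\mathbb{F}_3$-independent, so the sub-case analysis you anticipate is shorter than you fear; the only thing to verify is that after collecting repeated indices the combination is not identically zero, which follows since $\lambda\neq 0$.
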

	\begin{proof}
				Assume towards a contradiction  that this is not the case. \Cref{prop:cond-for-RS} implies that there must exist two triples of distinct  evaluation points $(x_1,x_2,x_3),(y_1,y_2,y_3)$, that agree on at most one coordinate, such that 
		\[
		\left|
		\begin{pmatrix}
		1 &x_1 &y_1 \\ 
		1 &x_2 &y_2 \\
		1 &x_3 &y_3 \\
		\end{pmatrix} 
		\right|
		= 0 
		\;.
		\]
		Equivalently, 
		 $(y_1 - y_2)(x_2 - x_3) = (y_2 - y_3)(x_1 - x_2)$. Since the $x_i$'s  are distinct elements  of the Sidon space $S$, $x_2 - x_3$ and  $x_1 - x_2$ are \emph{nonzero} elements in $S$. Similarly,  $y_1 - y_2$ and $y_2 - y_3$ are nonzero elements in  $S$. By definition of Sidon spaces,  there exists a nonzero $\lambda\in \mathbb{F}_3$ such that 
		$$\lambda(y_1-y_2)=y_2-y_3\text{ or } \lambda(y_1-y_2)=x_1-x_2,$$
		which  contradicts Lemma \ref{158}. Indeed, each of the  equations implies  a nontrivial linear dependence over $\mathbb{F}_3$ between at least three and at most four evaluation points (here we used the facts that the elements of each triple are distinct and that the two triples  agree on at most one coordinate).  
		%
		%
	\end{proof}
	We conclude this section with the proof of  
	\Cref{thm:rs-twodim-const}. 
	
	\begin{proof}[Proof of \Cref{thm:rs-twodim-const}]
		By \Cref{prop:k-2-explicit}, the code given in Construction \ref{RS k=2} is an RS-code of length $n = (3^m+1)/2$,  defined over  the field $\mathbb{F}_{3^{4m}}$,  which is of order $O(n^4)$, as claimed.
	\end{proof}

	\phantom{We remark again that this proof is an existence proof, namely, we show that there are $[n,k]_q$ RS-codes that can correct from $n-2k+1$ worst-case insdel errors where $q = n^{O(k)}$. Unfortunately, for the case of $k > 2$, we did not manage to provide explicit construction that achieves this field size. }

\subsection{A lower bound on the field size}
	\label{sec:rs-lower-bound}
	In Section \ref{sec:randomized-const} we proved the existence  of   optimal $[n,k]_q$ RS-codes for worst-case insdel errors  over fields of  size  $q=n^{O(k)}$.
	This section complements this result by providing a lower bound on the field size for such codes. Specifically, we ask how large must $q$ be in an $[n,k]_q$ RS-code that can correct from $n-2k+1$ worst-case insdel errors. We prove the following. 
	\RSLowerBound*
	
	\begin{proof}
		Consider an $[n,k]_q$ RS-code, defined by evaluation points  $\alpha_1, \ldots, \alpha_n$,  that can correct any $n-2k+1$ insdel errors.
		For a \emph{non-constant} polynomial $f$ of degree less than $k$ let $\mathcal{V}_f$ be the set of all subsequences of the codeword corresponding to $f$, of length $2k-1$:
		$$\mathcal{V}_f=\{(f(\alpha_{i_1}),\ldots, f(\alpha_{i_{2k-1}})):1\leq i_1<\ldots<i_{2k-1}\leq n\}\subseteq \Fq^{2k-1}.$$
		By Lemma \ref{lem:lcs}, since the code can decode from any $n-2k+1$ insdel errors, the sets $\mathcal{V}_f$ and  $\mathcal{V}_g$ for  two distinct polynomials $f, g$, are disjoint. Therefore,\footnote{This equation remains true also if we include the constant polynomials.} 
		\begin{equation}
		\label{good-eq4}
		\sum_{1\leq \deg(f)<k} |\mathcal{V}_f|\leq q^{2k-1}.
		\end{equation}
		Next, we provide a lower bound on the size of  $\mathcal{V}_f$.
		For any non-constant polynomial $f$, of degree less than $k$,  and any $a\in \Fq$ there are at most $k-1$ indices $i$ such that $f(\alpha_i)=a$. Thus, for a fixed vector $(a_1,\ldots, a_{2k-1}) \in \mathcal{V}_f$ there are at most $(k-1)^{2k-1}$ increasing vectors of indices $(i_1,\ldots,i_{2k-1})$ such that 
		$$(f(\alpha_{i_1}),\ldots, f(\alpha_{i_{2k-1}}))=(a_1,\ldots ,a_{2k-1}).$$ 
		Therefore $|\mathcal{V}_f|\geq \binom{n}{2k-1}(k-1)^{-(2k-1)}$.  Combined with \eqref{good-eq4} we have 
		\[
		\left( \frac{1}{k-1} \right)^{2k-1} \cdot \binom{n}{2k-1} \cdot \left(q^k - q\right) \leq q^{2k-1} \;,
		\]
		By rearranging and the fact that  
		 $q^{2k-1}/{(q^k - q)} \leq 2q^{k-1}$ for $q,k\geq 2$, we have  
		\[
	 \left(\frac{1}{2}\right)^{\frac{1}{k-1}} \left(\frac{n}{(2k-1)(k-1)}\right)^{\frac{2k-1}{k-1}}\leq q \;.\qedhere
		\] 
	\end{proof}
	As one can easily verify, this bound is rather weak, as it provides an improvement over the trivial lower bound of $q\geq n$ only for the vanishing  rate regime of  $k=O(n^{1/4})$. 
	For  codes of dimension $2$, the bound implies  
	 $q = \Omega(n^3)$, and it slowly degrades as one increases $k$. Nevertheless, it is always  at least $\Omega(n^2)$ for any constant $k$.
	 It is interesting to note that by combining 
	Proposition \ref{prop:lower-bound}  and Theorem \ref{prop:k-2-explicit}
	 it follows  that an $[n,2]_q$ RS-code that can decode from $n-3$ insdel errors requires that $\Omega(n^3)\leq q \leq O(n^4)$. Determining the  minimum possible value  of $q$ for this case is an interesting open problem. 
\section{Open questions}

\label{sec:open-que}
    This paper studies the performence of RS codes against insdel errors. We showed that there are RS-codes are optimal against insdel errors, i.e., they achieve the half-Singleton bound. We also construct explicit RS codes that achieve this bound and as far as we know, this is the first linear code that is shown to achieve this bound.
    
    As discussed, \Cref{cnst:abc-rs} is not optimal in terms of the field size. It is a fascinating open question to find an RS-code with an optimal field size. Specifically, the challenge is to construct an RS-code that can correct from any $n-2k+1$ insdel errors, over a field of size $O(n^{O(k)})$ (\Cref{thm:rs-sz} proves the existence of such codes).
	
	The lower bound on the field size proved in \Cref{prop:lower-bound} is far from giving a full picture of the tradeoff between dimension and  field size. The natural open question is to significantly improve our lower bound or provide a better upper bound.
	
	Finally, another interesting question is to provide an efficient decoding algorithm for our constructions of RS-codes.  
\section*{Acknowledgement}
We thank Shu Liu, Ivan Tjuawinata and Chaoping Xing for spotting an inaccuracy in the original statement of \autoref{prop:cond-for-RS}.

	\bibliographystyle{alpha}
	\bibliography{RSinsdel}

\end{document}